\renewcommand\vec{\mathbf}
\newtheorem{theorem}{Theorem}
\newtheorem{corollary}[theorem]{Corollary}
\newtheorem{lemma}[theorem]{Lemma}
\newtheorem{proposition}{Proposition}
\newtheorem{example}{Example}
\newtheorem{definition}{Definition}
\newtheorem{remark}{Remark}
\newcommand{\calR}{\mathcal{R}}
\newcommand{\calL}{\mathcal{L}}
\newcommand{\calA}{\mathit{Alg}}
\newcommand{\tp}{\mathrm{top}}
\newcommand{\scr}{\mathrm{sc}}
\newcommand{\GS}{\mathcal{GS}}
\newcommand{\GSP}{\mathcal{GSP}}
\title{\bf Cognitive Hierarchy and Voting Manipulation\thanks{Some of the results in this paper 
were presented at the 24th International Joint Conference on Artificial Intelligence (IJCAI-15) 
\citep{ElkindEtAlIJCAI2015}. We are grateful to the IJCAI-15 reviewers, 
as well as to the audiences of the Workshop on
Iterative Voting and Voting Games in Padova in 2014, 
the 5th International Workshop on Computational Social Choice (COMSOC-2014) in Pittsburgh, 
and to the participants of seminars in Toulouse in 2014 and in Dagstuhl in 2015 
for their useful feedback.}}
\author{Edith Elkind}
\affil{University of Oxford, elkind@cs.ox.ac.uk}
\author{Umberto Grandi}
\affil{University of Toulouse, umberto.grandi@irit.fr}
\author{Francesca Rossi}
\affil{University of Padova, frossi@math.unipd.it}
\author{Arkadii Slinko}
\affil{The University of Auckland, a.slinko@auckland.ac.nz}
\begin{document}

\maketitle

\begin{abstract}
\noindent
By the Gibbard--Satterthwaite theorem, every reasonable voting rule for three or more alternatives
is susceptible to manipulation: there exist elections where one or more voters can change the election
outcome in their favour by unilaterally modifying their vote.
When a given election admits several such voters, strategic voting becomes
a game among potential manipulators: a manipulative vote that leads to a better outcome
when other voters are truthful may lead to disastrous results when other voters
choose to manipulate as well. We consider this situation from the perspective of a boundedly rational voter, 
and use the cognitive hierarchy framework \citep{camerer2004cognitive} to identify good strategies.
We then investigate the associated algorithmic questions
under the $k$-approval voting rule, $k\ge 1$. We obtain positive algorithmic results
for $k=1, 2$ and NP- and coNP-hardness results for $k\ge 4$.
\end{abstract}

\section{Introduction}\label{sec:intro}
Imagine that you and your friends are choosing a restaurant to go to for dinner.
Everybody is asked to name their two most preferred cuisines, 
and the cuisine named most frequently will be selected (this voting rule is known as 2-approval). 
Your favourite cuisine is Japanese and your second most preferred cuisine is Indian.
Indian is quite popular among your friends and you know that if you name it among your favourite two cuisines, it
will be selected. On the other hand, you also know that only a few of your friends like Chinese food.
Will you vote for Japanese and Chinese to give Japanese cuisine a chance?

This example illustrates that group decision-making is a complex process that represents an
aggregation of individual preferences. Individual decision-makers would like to influence the final decision in a
way that is beneficial to them, and hence they may be strategic in communicating their individual choices.
Moreover, it is essentially impossible to eliminate strategic behavior by changing the voting rule:
the groundbreaking result of \citet{gib:j:gs} and \citet{sat:j:gs} states that, under any 
onto and non-dictatorial social choice rule, there exist situations where a voter can achieve a better outcome by 
casting a strategic vote rather than the sincere one, provided that everyone else votes sincerely;
in what follows, we will call such voters {\em Gibbard--Satterthwaite (GS) manipulators}. 

The Gibbard--Satterthwaite theorem alerts us that strategic behavior of voters cannot be ignored, 
but it does not tell us under which circumstances it actually happens. Of course, if there is just a 
single GS-manipulator at a given profile, and he\footnote{We use `he' to refer to voters and `she' to refer to candidates.}
is fully aware of other voters' preferences, it is rational for him to manipulate. 
However, even in this case this voter may prefer to vote truthfully, simply because 
he assigns a high value to communicating his true preferences; such voters are called {\em ideological}.
Moreover, if there are two or more GS-manipulators, it is no longer easy for them to make up their
mind in favour of manipulation: while the Gibbard--Satterthwaite
theorem tells us that each of these voters would benefit from voting strategically 
assuming that all other voters remain truthful, it does not offer any predictions if
several voters may be able to manipulate simultaneously. 
The issues faced by GS-manipulators in this case are illustrated by the following example.

\begin{example}\label{ex1}
{\em
Suppose four people are to choose among three alternatives by means of 2-approval,
with ties broken according to the order $a>b>c$.
Let the profile of sincere preferences be as in Table~\ref{table:ex1}.
There are two voters who prefer $b$ to $c$ to $a$, one voter who prefers $a$ to $c$ to $b$,
and one voter who prefers $c$ to $b$ to $a$.
If everyone votes sincerely, then $c$ gets 4 points,
$b$ gets 3 points and $a$ gets 1 point, so $c$ is elected. 
Voters 1 and 2 are Gibbard--Satterthwaite manipulators. Each of them  can make $b$ the winner by voting $bac$,
ceteris paribus.  Let us consider this game from the first voter's perspective, 
assuming that he is strategic; let $A_i$ denote the strategy set of voter $i$, $i=1, 2, 3, 4$. 
The strategy set of voter 1 can then be assumed to be $A_1=\{bca, bac\}$
(clearly, under 2-approval $cba$ is indistinguishable from $bca$, $abc$ is indistinguishable from $bac$, 
and the two votes that do not rank $b$ first are less useful than either $bca$ and $bac$).
Voter 1 has a good reason to believe that voters 3 and 4 will vote sincerely, 
as voter 3 cannot achieve an outcome that he would prefer to the current outcome and voter 4 is fully satisfied.

\begin{table}[h]
\begin{center}
\begin{tabular}{c|c|c|c}
voter 1&voter 2&voter 3&voter 4\\
\midrule
$b$&$b$&$a$&$c$\\
$c$&$c$&$c$&$b$\\
$a$&$a$&$b$&$a$\\
\end{tabular}
\caption{A preference profile. The most preferred candidates are on top, followed by the less preferred candidates in a complete ranking.
}\label{table:ex1}
\end{center}
\end{table}


{\bf Case 1.} 
If voter 1 believes that voter 2 is ideological, then he is analysing the game where $A_2=\{bca\}$, $A_3=\{acb\}$ 
and $A_4=\{cba\}$. In this case he just votes $bac$ and expects $b$ to become the winner. 

{\bf Case 2.} 
Suppose now that voter 1 believes that voter 2 is also strategic. Now voter 1 has to analyse the game with 
$A_1=A_2=\{bca, bac\}$, $A_3=\{acb\}$ and $A_4=\{cba\}$. If either one of the strategic players---voter 1 or voter 
2---manipulates and another stays sincere, $b$ will be the winner. However, if they both manipulate, their worst 
alternative $a$ will become the winner.  
Thus, in this case voter 1's manipulative strategy does not dominate his sincere vote, 
and if voter~1 is risk-averse, he should refrain from manipulating.
}
\end{example}

A popular approach (see Section~\ref{sec:related}) is to view voting as a strategic game among the voters, 
and use various game-theoretic solution concepts to predict the outcomes. 
The most common such concept is Nash equilibrium, which 
is defined as a combination of strategies, one for each player, 
such that each player's strategy is a best response to other players' strategies. 
In these terms, the Gibbard--Satterthwaite theorem says that under every reasonable voting rule 
there are situations where truthful voting is not a Nash equilibrium.
As a further illustration, the game analysed in Example~\ref{ex1} (Case~2) has two Nash equilibria:
in the first one, voter 1 manipulates and voter 2 remains truthful, and in the second one the roles
are switched. 

However, the principle that players can always be expected to choose equilibrium strategies is 
not universally applicable. Specifically, if players have enough experience 
with the game in question  (or with similar games), both theory and experimental 
results suggest that players are often able to learn equilibrium strategies \citep{fudenberg1998theory}. However, it is 
also well-known since the early work of \cite{shapley64} that learning dynamics may fail to converge to an equilibrium.
Moreover, in many applications---and voting is one of them---players' interactions have only imperfect precedents, 
or none at all so no learning is possible. If equilibrium is justified in such applications, it must be via 
strategic thinking of players rather than learning. However, in some games the required reasoning is too complex for 
such a justification of equilibrium to be behaviourally plausible 
\citep{harsanyi1988general,brandenburger1992knowledge}. This is fully applicable to voting, where such reasoning, 
beyond very simple profiles, is impossible because of the number of voters involved.

In fact, a number of recent experimental and empirical studies suggest that players' responses in strategic 
situations often deviate systematically from equilibrium strategies, and are better explained by the structural nonequilibrium 
level-$k$ \citep{nagel1995unraveling,stahl1994experimental} or cognitive hierarchy (CH) models 
\citep{camerer2004cognitive}; see also a survey by \cite{crawford2013structural}. 
In a level-$k$ model players anchor their 
beliefs in a non-strategic initial assessment of others' likely responses to the game.
Non-strategic players are said to be level-0 players. 
Level~1 players believe that all other players are at level~0, 
and they give their best response on the basis of this belief. 
Level~2 players assume that all other players belong to level~1, and, more generally, 
players at level $k$ give their best response assuming that all other players are at level $k-1$. 
The cognitive hierarchy model is similar, but with an essential difference: 
in this model players of level $k$ respond to a mixture of types from level $0$ to level $k-1$. 
It is frequently assumed that other players' levels are drawn from a Poisson distribution. 
Some further approaches based on similar ideas are surveyed by \citet{aaai/WrightL10}. 

The aim of our work is to explore the applicability of these models to voting games.
We believe that specifics of voting, and, in particular, the heterogeneity of types
of voters in real electorates, make the cognitive hierarchy framework more appropriate 
for our purposes. In more detail, an important feature that distinguishes voting from many other applications 
of both level-$k$ and CH models is the role of level-0 players. Specifically,  
level-0 (non-strategic) players are typically assumed to choose their strategy at random, 
and this type practically does not appear in real games at all. 
In contrast, in applications to voting it is natural to associate level-0 players 
with ideological voters, who have a significant presence in real elections. 
For instance, in the famous Florida vote (2000), where Bush won over Gore by just 537 votes,  
97,488 Nader supporters voted for Nader---even though in such a close election every strategic voter 
should have voted either for Gore or for Bush (and an overwhelming majority of Nader supporters 
preferred Gore to Bush). However, in the level-$k$ analysis voters of level~2 assume 
that all other voters have level 1, i.e., level-$k$ models cannot be used
to accommodate ideological voters. We therefore focus on the cognitive hierarchy approach.
Moreover, we limit ourselves to considering the first three levels of the hierarchy
(i.e., level-0, level-1, and level-2 players), as it seems plausible that very few voters
are capable of higher-level reasoning (see the survey by \citet{crawford2013structural}
for some evidence in support of this assumption). 

Adapting the cognitive hierarchy framework to voting games is not a trivial task.
First, it does not make sense to assume that voters' levels follow a specific distribution.
Second, in the standard model of social choice voters' preferences over alternatives
are ordinal rather than cardinal. The combination of these two factors means that, 
in general, for voters at level~2 or higher their best response may not be well-defined.
We therefore choose to focus on strategies that are not weakly dominated according
to the voter's beliefs. We present our formal definitions and the reasoning that
justifies them in Section~\ref{sec:model}.

To develop a better understanding of the resulting model, we instantiate it for a specific family
of voting rules, namely, $k$-approval with $k\ge 1$. We develop a classification of \mbox{level-1} 
strategies under $k$-approval and clarify the relationship between level-1 reasoning and the predictions of the 
Gibbard--Satterthwaite theorem (Section~\ref{sec:classification}). We then switch our attention to level-2 strategies, 
and, in particular, to the complexity of computing such strategies.
For $k$-approval with $k=1$ (i.e., the classic Plurality rule) we describe an efficient algorithm 
that decides whether a given strategy weakly dominates another strategy; as a corollary of this result, 
we conclude that under the Plurality rule level-2 strategies can be efficiently computed 
and efficiently recognised (Section~\ref{sec:plu}). We obtain a similar result for 2-approval under 
an additional assumption of {\em minimality}
(Section~\ref{sec:2app}). Briefly, this assumption means that the level-2 player expects all level-1 players
to manipulate by making as few changes to their votes as possible. However, for larger values of $k$
finding level-2 strategies becomes computationally challenging: we show that this problem
is NP-hard for $k$-approval with $k\ge 4$ (Section~\ref{sec:hard}). As the problem of finding a level-1 strategy
under $k$-approval is computationally easy for any value of $k\ge 1$ (this follows immediately
by combining our characterization of level-1 strategies with the classic results of \citet{btt89}),
this demonstrates that higher levels of voters' sophistication come with a price tag in terms
of algorithmic complexity.


\subsection{Related work}\label{sec:related}
There is a substantial body of research in social choice theory and in political science 
that models non-truthful voting as a strategic interaction,
with a strong focus on Plurality voting;
this line of work dates back to \citet{far:b:voting} and includes important contributions 
by \citet{Cain1978}, \citet{fed-sen-wri:j:entry} and \citet{Cox1997}, to name a few. 

More recently, voting games and their equilibria have also received 
a considerable amount of attention from computer science researchers, with
a variety of approaches used to eliminate counterintuitive Nash equilibria.
For instance, some authors assume that voters have a slight preference for abstaining
or for voting truthfully when they are not pivotal
\citep{bat:j:abstentions,dut-sen:j:nash,des-elk:c:eq,tho-lev-ley:c:empirical,obr-mar-tho:c:truth-biased,emos15,OLMRR15}.
Other works consider refinements of Nash equilibrium, such as subgame-perfect Nash equilibrium 
\citep{des-elk:c:eq,xia-con:c:spne}, strong equilibrium \citep{mes-pol:j:strong} 
or trembling-hand equilibrium \citep{OREPJ16},
or model the reasoning of voters who have incomplete or imperfect information about each others'
preferences \citep{mye-web:j:voting,Myatt2007,MLR14}.
Dominance-based solution concepts have been investigated as well 
\citep{mou:j:dominance,dhi-loc:j:dominance,bue-dhi-vid:j:domsolv,dellis2010weak,MLR14}, 
albeit from a non-computational perspective.
All the aforementioned papers do not impose any restrictions on the voters' reasoning ability, 
de facto assuming that they are fully rational. Boundedly rational voters are considered
by \citet{GHRS}; however, their work focuses on strategic interactions among 
Gibbard--Satterthwaite manipulators, and studies conditions that ensure existence of 
pure strategy Nash equilibria in the resulting games.
In contrast, in this paper we go further and formally define the degree of voters' rationality  
by using the cognitive hierarchy approach. 

Level-$k$ models and the cognitive hierarchy framework have been long used to model 
a variety of strategic interactions;
we refer the reader to the survey of \citet{crawford2013structural}. Nevertheless, to the best
of our knowledge, ours is the first attempt to apply these ideas in the context of voting.

A  topic closely related to voting games is voting dynamics, where players change their votes one by one in response to the current outcome
\citep{mei-pol:c:convergence,rei-end:c:polls,rey-wil:c:bestreply,OMPRJ15,EOPR16,lev-ros:j:iterative,KSLR17}; 
see also a survey by \citet{Meir-trends}.  
However, this line of work assumes the voters to be myopic.

Our work can also be seen as an extension of the model of safe strategic voting proposed by \citet{safe2}. 
However, unlike us, Slinko
and White focus on a subset of GS-manipulators who (a) all have identical preferences
and (b) choose between truthtelling and using a specific manipulative vote,  
and on the existence of a weakly dominant strategic vote in this setting
(such votes are called {\em safe strategic votes}). In contrast, our decision-maker takes into account that
manipulators may have diverse preferences and have strategy sets that contain more than one strategic vote.
It is therefore not surprising that computing safe strategic votes is easier than finding level-2 strategies:
\citet{safe-sagt} show that safe strategic votes with respect to $k$-approval can be computed efficiently
for every $k\ge 1$, whereas we obtain hardness results for $k\ge 4$. 

One of our contributions is a classification of manipulative votes under $k$-approval with 
lexicographic tie-breaking. \citet{peters2012manipulability} propose a similar classification for several 
approval-based voting rules. However, they view $k$-approval as a non-resolute voting rule, and therefore 
their results do not apply in our setting.

\paragraph{Paper outline.} The paper is organised as follows.
We introduce the basic terminology and definitions in Section~\ref{sec:prelim}.
Section~\ref{sec:model} presents the adaptation of the cognitive hierarchy framework to the setting of voting games.
We then focus on the study of $k$-approval. Section~\ref{sec:classification} 
describes the structure of level-1 strategies under $k$-approval.
In Section~\ref{sec:plu} we provide an efficient algorithm for identifying level-2
strategies with respect to the Plurality rule. Section~\ref{sec:2app}
contains our results for $2$-approval, and in Section~\ref{sec:hard} we present
our hardness results for $k$-approval with $k\ge 4$.  
Section~\ref{sec:conclusions} summarises our results and suggests directions for future work.


\section{Preliminaries}\label{sec:prelim}
In this section we introduce the relevant notation and terminology 
concerning preference aggregation and normal-form games.
\subsection{Preferences and Voting Rules}
We consider $n$-voter elections over a candidate set $C=\{c_1, \dots, c_m\}$; in what follows we use
the terms {\em candidates} and {\em alternatives} interchangeably. Let $\calL(C)$ denote the set
of all linear orders over $C$.
An election is defined by a {\em preference profile} $V=(v_1, \dots, v_n)$, where each $v_i$,
$i\in [n]$, is a linear order over $C$; we refer to $v_i$ as the {\em sincere vote}, or {\em preferences}, of voter $i$.
For two candidates $c_1, c_2\in C$ we write $c_1\succ_i c_2$, if voter~$i$ ranks $c_1$ above $c_2$, and say that voter $i$ {\em prefers} $c_1$ to $c_2$.
For brevity we will sometimes write $ab\dots z$ to represent a vote $v_i$ such that $a\succ_i b\succ_i\cdots\succ_i z$.
We denote the top candidate in $v_i$ by $\tp(v_i)$.  Also, we denote the set of top $k$ candidates in~$v_i$
by $\tp_k(v_i)$; note that $\tp_1(v_i)=\{\tp(v_i)\}$ and
$a\succ_i b$ for all $a\in\tp_k(v_i)$ and $b\in C\setminus\{\tp_k(v_i)\}$.


A (resolute) {\em voting rule} $\calR$ is a mapping that, given a profile $V$, outputs a candidate ${\calR(V)\in C}$,
which we call the {\em winner of the election defined by~$V$}, or simply the {\em winner at~$V$}. 
In this paper we focus on
the family of voting rules known as {\em $k$-approval}.
Under $k$-approval, $k\in [m-1]$, each candidate receives one point from each voter who ranks
her in top $k$ positions; the $k$-approval score of a candidate $c$, denoted by $\scr_k(c, V)$,
is the total number of points that she receives.
The winner is chosen among the candidates with the highest score
according to a fixed tie-breaking  linear order $>$ on the set of candidates $C$: specifically, the winner
is the highest-ranked candidate with respect to this order among the candidates with the highest score.
The $1$-approval voting rule is widely used and known as Plurality. We will denote the $k$-approval rule
(with tie-breaking based on a fixed linear order $>$) by $\calR_k$.
We say that a candidate $x$ {\em beats} candidate $y$ at $V$ with respect to $\calR_k$ and the tie-breaking order $>$
if $\scr_k(x, V) > \scr_k(y, V)$ or $\scr_k(x, V)=\scr_k(y, V)$ and $x>y$.

\subsection{Strategic Voting}\label{sec:strategic}

Given a preference profile $V=(v_1, \dots, v_n)$, and a linear order $v'_i\in\calL(C)$, we denote by $(V_{-i}, v'_i)$
the preference profile obtained from $V$ by replacing $v_i$ with $v'_i$; for readability,
we will sometimes omit the parentheses around $(V_{-i}, v'_i)$ and write $V_{-i}, v'_i$.
We will often use this notation when voter $i$ submits a strategic vote $v'_i$ instead of his sincere vote $v_i$.


\begin{definition}\label{def:GS}
Consider a profile $V=(v_1, \dots, v_n)$, a voter $i$, and a voting rule $\calR$. 
We say that a linear order $v'_i$ is a {\em manipulative vote} of voter $i$ at $V$
with respect to $\calR$ if $\calR(V_{-i}, v'_i)\succ_i \calR(V)$.
We say that $i$ manipulates {\em in favour of candidate $c$} by submitting a vote $v'_i$ if
$c$ is the winner at $\calR(V_{-i}, v'_i)$. 
A voter $i$ is a {\em Gibbard--Sat\-ter\-thwaite manipulator}, or a {\em GS-manipula\-tor}, 
at $V$ with respect to $\calR$ if the set of his manipulative votes at $V$
with respect to $\calR$ is not empty.
We denote the set of all GS-manipulators at $V$ by $N(V, \calR)$.
\end{definition}

Note that a voter may be able to manipulate in favour
of several different candidates. 
Let $F_i=\{c\in C\mid \calR(V_{-i}, v'_i)=c\textrm{ for some }v'_i\in \calL(C)\}$;
we say that the candidates in $F_i$ are {\em feasible for $i$ at $V$ with respect to $\calR$}. 
Note that $F_i\neq\emptyset$ for all $i\in [n]$, as this set contains the $\calR$-winner at $V$
under truthful voting.
We say that two votes $v$ and $v'$ over the same candidate set $C$
are {\em equivalent} with respect to a voting rule $\calR$
if $\calR(V_{-i}, v)=\calR(V_{-i}, v')$ for every voter $i\in[n]$ and every profile $V_{-i}$
of other voters' preferences. 
It is easy to see that
$v$ and $v'$ are equivalent with respect to $k$-approval if and only if $\tp_k(v)=\tp_k(v')$.

\subsection{Normal-form Games}
A {\em normal-form game} $(N, (A_i)_{i\in N}, (\succeq_i)_{i\in N})$  is defined by a set of {\em players} $N$, and,
for each $i\in N$,  a set of {\em strategies} $A_i$ and a preference relation $\succeq_i$
defined on the space of {\em strategy profiles}, i.e., tuples of the form ${\vec s}=(s_1, \dots, s_n)$,
where $s_i\in A_i$ for all $i\in N$\footnote{While one usually defines normal-form games in terms of utility functions,
 defining them in terms of preference relations is more appropriate for our setting,
as preference profiles only provide ordinal information about the voters' preferences.}.
For each pair of strategy profiles ${\vec s}, {\vec t}$ and a player $i\in N$, we write ${\vec s}\succ_i {\vec t}$
if ${\vec s}\succeq_i {\vec t}$ and ${\vec t}\not\succeq_i {\vec s}$.
A normal-form game is viewed as a game of complete and perfect information,
which means that all players are fully aware of the structure of the game they are playing.

Given a strategy profile ${\vec s}=(s_1, \dots, s_n)$ and a strategy $s'_i\in A_i$,
we denote by $({\vec s}_{-i},s'_i)$
the strategy profile $(s_1,\ldots, s'_i,\ldots,s_n)$,
which is obtained from $\vec s$ by replacing $s_i$ with $s'_i$.
We say that a strategy $s_i\in A_i$ {\em weakly dominates} another strategy $s'_i\in A_i$
if  for every strategy profile ${\vec s}_{-i}$ of other players we have 
$({\vec s}_{-i}, s_i)\succeq_i ({\vec s}_{-i},s'_i)$ and
there exists a profile ${\vec s}_{-i}$ of other players' strategies such that
$({\vec s}_{-i}, s_i)\succ_i ({\vec s}_{-i},s'_i)$.


\section{The Model}\label{sec:model}
As suggested in Section~\ref{sec:intro}, our goal is to analyse voting as a strategic game
and consider it from the perspective of the cognitive hierarchy model. As we reason about voters'
strategic behavior, we consider games where players are voters, their strategies 
are ballots they can submit, and their preferences over strategy profiles 
are determined by election outcomes under a given voting rule. We then use the cognitive hierarchy framework
to narrow down the players' strategy sets.

\subsection{Cognitive Hierarchy Framework for Voting Games}
Recall that, in the general framework of cognitive hierarchy, players at level 0 are 
typically assumed to choose their action at random. This is because in general normal-form
games a player who is unable to deliberate about other players' actions 
usually has no reason to prefer one strategy over another.
In contrast, in the context of voting, there is an obvious focal strategy, namely, truthful
voting. Thus, in our model we associate level-0 voters with ideological voters, 
i.e., voters who always vote according to their true preferences.

At the next level of hierarchy are level-1 voters. These voters
assume that all other voters are ideological (i.e., are at level 0), 
and choose their vote so as to get the best outcome they consider possible under this assumption.
That is, voter $i$ votes so as to make his most preferred candidate in $F_i$
the election winner (in particular, if $F_i$ is a singleton, voter $i$ votes truthfully).
We say that a vote $v'_i$ of a voter $i$ 
is a {\em level-1 strategy at profile $V$ with respect to $\calR$} 
if $\calR(V_{-i}, v'_i)\succ_i c$ for all $c\in F_i\setminus\{\calR(V_{-i}, v'_i)\}$.
Note that a level-1 voter that is not a Gibbard--Satterthwaite manipulator 
has no reason to vote non-truthfully, as he does not expect
to be able to change the election outcome according to his tastes;
hence we assume that such voters are truthful.

We are now ready to discuss level-2 voters. These voters believe that all other voters
are at levels 0 or 1 of the cognitive hierarchy. We will further assume that
level-2 voters are agnostic about other voters' levels; thus, from their perspective
every other voter may turn out to be a level-0 voter (which in our setting is equivalent 
to being sincere) or a level-1 voter.
Thus, from the point of view of a level-2 voter, 
a voter who is not a GS-manipulator will stick to his truthful vote, whereas
a GS-manipulator will either choose his action among level-1 strategies or 
(in case he is actually a level-0 voter) vote truthfully.
Thus, when selecting his vote strategically, a level-2 voter takes
into account the possibility that other voters---namely, the GS-manipulators---may be strategic as well.

We further enrich the model by assuming that a level-2 voter may be able
to identify, for each other voter $i$, a subset of level-1 strategies
such that $i$ always chooses his vote from that subset, i.e., a level-2 voter
may be able to rule out some of the level-1 strategies of other voters.
There are several reasons to allow for this possibility. First, the set of all level-1 strategies
for a given voter can be very large, and a voter may be unable or unwilling to identify all 
such votes. 
For example, our level-2 voter may know or believe that other voters use a specific algorithm (e.g., that of \cite{btt89})
to find their level-1 strategies; 
in this case, his set of strategies for each voter $i$ would consist of the truthful vote $v_i$ 
and the output of the respective algorithm. Also, voters may be known not to choose 
manipulations that are (weakly) dominated by other manipulations.  
Finally, voters may prefer not to change their vote beyond what is necessary to make their target 
candidate the election winner, either because they want their vote to be as close to the true preference order as 
possible (see the work of \cite{obr-elk:c:opt}), 
or for fear of unintended consequences of such changes in the complex environment of the game.
Thus, a preference profile together with a voting rule define not just a single game, but a family
of games, which differ in sets of actions available to GS-manipulators.


\subsection{Gibbard--Satterthwaite Games}\label{sec:gs}
We will now describe a formal model that will enable us to reason about the decisions 
faced by a level-2 voter. For convenience, we assume that voter $1$ is a level-2 voter and describe
a normal-form game that captures his perspective of strategic interaction, i.e., his beliefs about the game he is playing.

Fix a voting rule $\calR$, let $V$ be a profile over a set of candidates $C$, 
let $N=N(V,\calR)$ be the set of GS-manipulators at $V$ with respect to $\calR$, and set $N_1=N\cup\{1\}$.
We consider a family of normal-form games defined as follows. 
In each game the set of players is $N_1$, i.e., voter $1$ is a player irrespective
of whether he is actually a GS-manipulator. 
For each player $i\in N_1\setminus\{1\}$, $i$'s strategy set $A_i$ consists of his truthful vote
and a (possibly empty) subset of his level-1 strategies;
for voter $1$ we have $A_1=\calL(C)$, i.e., $1$ can submit an arbitrary ballot.
It remains to describe the voters' preferences over strategy profiles.
For a strategy profile $V^* = (v^*_i)_{i\in N_1}$, where $v^*_i\in A_i$ for $i\in N_1$,
let $V[V^*]=(v'_1, \dots, v'_n)$ be the preference profile
such that $v'_i=v_i$ for $i\not \in N_1$ and $v'_i=v^*_i$ for $i\in N_1$.
Then, given two strategy profiles $V^*$ and $V^{**}$ and a voter $i\in N_1$,
we write $V^*\succeq_i V^{**}$ if and only if  $i$ prefers
$\calR(V[V^*])$ to $\calR(V[V^{**}])$ or $\calR(V[V^*])=\calR(V[V^{**}])$.
We refer to any such game as a {\em GS-game}.  

We denote the set of all GS-games for $V$ and $\calR$ by $\GS(V, \calR)$.
Note that an individual game in $\GS(V, \calR)$ is fully determined by the 
GS-manipulators' sets of strategies, i.e.,
$(A_i)_{i\in N(V, \calR)}$ (player $1$'s set of strategies is always the same, namely, $\calL(C)$). 
Thus, in what follows, we write $G=(V, \calR, (A_i)_{i\in N(V, \calR)})$;
when $V$ and $\calR$ are clear from the context, we simply write $G = (A_i)_{i\in N}$.
We refer to a strategy profile in a GS-game as a {\em GS-profile},
and we will sometimes identify the GS-profile $V^* = (v^*_i)_{i\in N_1}$ with the preference profile
$V[V^*]$. We denote the set of all GS-profiles in a game $G$ by $\GSP(G)$. 

We will now argue that games in $\GS(V, \calR)$ reflect the perspective of voter $1$
when he is at the second level of the cognitive hierarchy. Fix a game $G\in\GS(V, \calR)$.
Note first that, since voter $1$ believes that all other voters belong
to levels 0 and 1 of the cognitive hierarchy, 
he expects all voters who are not GS-manipulators to vote truthfully, i.e., he does
not need to reason about their strategies at all. This justifies having $N_1 = N\cup\{1\}$ as our set of players.
On the other hand, consider a voter $i\in N\setminus\{1\}$. Voter 1 considers it possible
that $i$ is a level-0 voter, who votes truthfully. Voter 1 also entertains
the possibility that $i$ is a level-1 voter, in which case $i$'s vote
has to be a level-1 strategy; as argued above, voter $1$ may also be able to rule
out some of $i$'s level-1 strategies. Consequently, the set $A_i$, which, by definition, contains
$v_i$, consists of all strategies that voter $1$ considers possible for $i$.
Thus, voter $1$'s view of other voters' actions is captured by $G$.

We are now ready to discuss level-2 strategies.
In game-theoretic literature, it is typical to assume that a level-2 player is endowed with probabilistic
beliefs about other players' types as well as a utility function describing his payoffs under all
possible strategy profiles. Under these conditions, it makes sense to define level-2 strategies
as those that maximise player 2's expected payoff. However, in the absence of numerical information, 
as in the case of voting games, we cannot reason about expected payoffs. We can, however, compare
different strategies pointwise, and remove strategies that are weakly dominated by other strategies. 
On the other hand, if a strategy $v$ is not weakly dominated, a level-2 player may hold beliefs
that make him favour $v$, so no such strategy can be removed from consideration
without making additional assumptions about the behavior of players in $N(V, \calR)$.
This reasoning motivates the following definition of a level-2 strategy.

\begin{definition}\label{def:level-2}
Given a GS-game $G=(V, \calR, (A_i)_{i\in N(V, \calR)})$, we say that a strategy $v\in A_1$ of player~1
is a {\em level-2 strategy} if no other strategy of player~1 weakly dominates $v$.
\end{definition}

We note that being weakly undominated is not a very demanding property: a strategy 
can be weakly undominated even if it fares badly in many scenarios. This is illustrated 
by the following example.

\begin{example}\label{ex:lev2}
{\em
Consider the 4-voter profile over $\{a, b, c, d\}$ given in Table~\ref{table:ex-lev2}. 
Suppose that the voting rule
is Plurality and the tie-breaking rule is $a>b>c>d$. As always, we assume that voter 1 
is the level-2 voter. Voters 2, 3, and 4 are GS-manipulators; their most preferred
feasible candidates are, respectively, $d$, $b$, and $c$. Consider the GS-game where
$A_2=\{bdac, dbac\}$, $A_3=\{cbad, bcad\}$, $A_4=\{dcab, cdab\}$. In this game every vote
that does not rank $d$ first is a level-2 strategy for the first player. Indeed, 
a vote that ranks $a$ first is optimal when all other players submit their sincere votes;
a vote that ranks $b$ first is optimal when players 2 and 3 stay sincere, but player 4 votes for $c$;
and a vote that ranks $c$ first is optimal when player 2 votes for $d$, but players 3 and 4 stay sincere.
Note, in particular, that, by changing his vote from $abcd$ (his sincere vote) to $cabd$, 
player $1$ changes the outcome from $a$ (his top choice) 
to $c$ (his third choice) when other players vote truthfully; however, this behavior is rational
if player 1 expects players 3 and 4 (but not player 2) to vote sincerely.

\begin{table}[h]
\begin{center}
\begin{tabular}{c|c|c|c}
voter 1	&voter 2&voter 3&voter 4\\
\midrule
$a$	&$b$	&$c$	&$d$\\
$b$	&$d$	&$b$	&$c$\\
$c$	&$a$	&$a$	&$a$\\
$d$	&$c$	&$d$	&$b$
\end{tabular}
\caption{A profile where voter 1 has three distinct level-2 strategies under Plurality voting.
\label{table:ex-lev2}
}
\end{center}
\end{table}
}
\end{example}

Example~\ref{ex:lev2} illustrates that level-2 strategies are not `safe': there can be circumstances 
where a level-2 strategy results in a worse outcome than sincere voting.
Now, a cautious level-2 player may prefer to stick to his sincere vote
unless he can find a manipulative vote which leads to an outcome that is at least as desirable
as the outcome under truthful voting, for any combination of actions of other players
that our level-2 player considers possible. The following definition, which is motivated
by the concept of safe strategic voting \citep{safe2}, describes the set 
of strategies that even a very cautious level-2 player would prefer to sincere voting.

\begin{definition}
Given a GS-game $G=(V, \calR, (A_i)_{i\in N(V, \calR)})$, we say that a strategy $v\in A_1$ of player~1
is an {\em improving strategy} if $v$ weakly dominates player 1's sincere strategy~$v_1$.
\end{definition}

We note that a level-2 strategy may fail to be an improving strategy, 
and conversely, an improving strategy is not necessarily a level-2 strategy.
For instance, in Example~\ref{ex1} the strategy $bac$
is a level-2 strategy, but not an improving strategy, and none of the level-2 strategies
in Example~\ref{ex:lev2} is improving.
However, it is easy to see that if a player has an improving strategy, 
he also has an improving strategy
that is a level-2 strategy. Moreover, an improving strategy exists if and only
if sincere voting is not a level-2 strategy.

One can also ask if a given strategy weakly dominates all other (non-equivalent) strategies. 
However, while strategies with this property are highly desirable, from the perspective
of a strategic voter it is more important to find out whether his truthful strategy is weakly dominated.
Indeed, the main issue faced
by a strategic voter is whether to manipulate at all, and if a certain vote
can always ensure an outcome that is at least as good, and sometimes better, 
as that guaranteed by his truthful vote, 
this is a very strong incentive to use it, even if another non-truthful vote may be better
in some situations. 
This issue is illustrated by Example~\ref{ex:lowerranked} below, which 
describes a profile where a player has two incomparable improving strategies.

\begin{example}\label{ex:lowerranked}
{\em
Let the profile of sincere preferences be as in Table~\ref{table:example}, and assume that 
the voting rule is Plurality and the tie-breaking order is given by $w>d>c>b>a$. 
The winner at the sincere profile is $w$. All level-1 strategies of voter 2 are equivalent to
$cbdwa$, whereas all level-1 strategies of voter 3 are equivalent to
$dcbwa$; voters 4 and 5 are not GS-manipulators.
Consider the GS-game where for $i\in\{2, 3\}$
the set of strategies of player $i$ consists of his truthful vote and all of his level-1 strategies.
Voter 1, who is our level-2 player, 
can manipulate either in favour of $b$ or in favour of $d$, by ranking the respective candidate first. 
Indeed, for player 1 both $badwc$ and $dabwc$ weakly dominate truthtelling.
However, neither of these strategies weakly dominates the other:
$badwc$ is preferable if no other player uses a level-1 strategy, whereas $dabcw$
is preferable if player 2 uses his level-1 strategy, but player 3 votes sincerely.

\begin{table}[h]
\begin{center}
\begin{tabular}{c|c|c|c|c}
voter 1&voter 2&voter 3&voter 4&voter 5\\
\midrule
$a$&$b$&$c$&$d$&$w$\\
$b$&$c$&$d$&$w$&$a$\\
$d$&$d$&$b$&$c$&$b$\\
$w$&$w$&$w$&$a$&$c$\\
$c$&$a$&$a$&$b$&$d$\\
\end{tabular}
\caption{Player 1 has two incomparable improving strategies.
}\label{table:example}
\end{center}
\end{table}
}
\end{example}

We note that a level-2 voter may find it useful to act as a counter-manipulator
\citep{pattanaik1976threats,GHRS}, i.e., 
to submit a vote that is not manipulative with respect to the truthful
profile, but minimises the damage from someone else's manipulation.

\begin{example}\label{ex:counter}
{\em
Let the profile of sincere preferences be as in Table~\ref{table:example2}, and assume that 
the voting rule is Plurality and the tie-breaking order is given by $a>b>c$. 
Under truthful voting $a$ wins, so voter 6 is the only GS-manipulator: if he changes his vote
to $bca$ then $b$ wins, and $b\succ_6 a$. Therefore, for voter 1 voting $acb$ is preferable to voting truthfully:
this insincere vote has no impact if voter 6 votes truthfully, but prevents $b$ from becoming a winner when voter 6
submits a manipulative vote. 

Thus, in this example $acb$ is an improving strategy, 
and truthful voting is not a level-2 strategy as it is weakly dominated
by voting $acb$. In contrast, $acb$ is a level-2 strategy, as no other strategy weakly dominates it.

\begin{table}[h]
\begin{center}
\begin{tabular}{c|c|c|c|c|c}
voter 1&voter 2&voter 3&voter 4 &voter 5 &voter 6\\
\midrule
$c$&$a$&$a$&$b$&$b$&$c$\\
$a$&$b$&$b$&$a$&$a$&$b$\\
$b$&$c$&$c$&$c$&$c$&$a$\\
\end{tabular}
\caption{Countermanipulation under Plurality.}\label{table:example2}
\end{center}
\end{table}
}
\end{example}


\subsection{Algorithmic Questions}
From an algorithmic perspective, perhaps the most natural questions 
suggested by our framework are how to decide whether a given strategy
is a level-$2$ strategy, or how to compute a level-$2$ strategy.
A related question is whether a given strategy is an improving strategy
and whether an improving strategy can be efficiently computed.
These questions offer an interesting challenge from an algorithmic perspective:
the straightforward algorithm for deciding whether a given strategy is a level-2 strategy
or an improving strategy relies on considering all combinations of other players' strategies,
and hence has exponential running time. It is therefore natural to ask 
whether for some voting rules exhaustive choice can be avoided.
We explore this question in Sections~\ref{sec:plu}--\ref{sec:hard};
for concreteness, we focus on $k$-approval, for various values of $k$.


\section{Level-1 Strategies Under $\boldsymbol{k}$-Approval}\label{sec:classification}
The goal of this section is to understand and classify level-1 strategies under the $k$-approval voting rule;
this will help us reason about level-2 strategies in subsequent sections.
In what follows, we fix a linear order $>$ used for tie-breaking. We start with a simple, but useful lemma.

\begin{lemma}
\label{mantypes}
Fix $k\ge 1$.
Consider a profile $V$ over $C$, let $w$ be the $k$-approval winner at $V$, 
and let $x$ be an alternative in $C\setminus\{w\}$.
Then any manipulative vote by a voter $i$ in favour of $x$ at $V$
falls under one of the following two categories:
\begin{description}
\item[Type 1]
Voter $i$ increases the score of $x$ by 1 without decreasing the score of  $w$.
In this case $w, x\not\in\tp_k(v_i)$, $x\succ_i w$,
and the manipulative vote $v'_i$ satisfies $x\in\tp_k(v'_i)$, $w\not\in\tp_k(v'_i)$.
In such cases voter $i$ will be referred to as a {\em promoter} of~$x$.
\item[Type 2] 
Voter $i$ decreases the score of $w$ (and possibly that of some other alternatives) by 1
without increasing the score of $x$.
In this case $w, x\in\tp_k(v_i)$, $x\succ_i w$,
and the manipulative vote $v'_i$ satisfies $x\in\tp_k(v'_i)$, $w\not\in\tp_k(v'_i)$.
In such cases voter $i$ will be referred to as a {\em demoter} of~$w$.
Manipulations of type 2 only exist for $k\ge 2$.
\end{description}
\end{lemma}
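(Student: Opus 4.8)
The plan is to track how voter $i$'s deviation from $v_i$ to $v'_i$ changes the $k$-approval scores, exploiting the fact that each candidate's score changes by at most one point. Writing $s(c)=\scr_k(c,V)$ and $s'(c)=\scr_k(c,(V_{-i},v'_i))$, we have $s'(c)=s(c)+1$ exactly when $c\in\tp_k(v'_i)\setminus\tp_k(v_i)$, $s'(c)=s(c)-1$ exactly when $c\in\tp_k(v_i)\setminus\tp_k(v'_i)$, and $s'(c)=s(c)$ otherwise; in particular $s'(c)-s(c)\in\{-1,0,1\}$ for every $c$. Two further facts are immediate and used throughout: since $v'_i$ manipulates in favour of $x$ we have $x=\calR(V_{-i},v'_i)\succ_i\calR(V)=w$, so $x\succ_i w$; and since $\tp_k(v_i)$ consists of the top $k$ candidates of $v_i$, it is downward closed under $\succ_i$, i.e. $c\in\tp_k(v_i)$ together with $c'\succ_i c$ forces $c'\in\tp_k(v_i)$ (and likewise for $v'_i$).

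Next I would show that the relative standing of $x$ and $w$ improves by at least one point, i.e. $[s'(x)-s'(w)]-[s(x)-s(w)]\ge 1$. This is a short tie-breaking analysis: $w$ beats $x$ at $V$ while $x$ beats $w$ at $(V_{-i},v'_i)$, so the case $s(x)>s(w)$ is excluded, and if $s(x)=s(w)$ then tie-breaking forces $w>x$, hence $s'(x)>s'(w)$ strictly, while if $s(x)<s(w)$ then $s(w)-s(x)\ge 1$ and $s'(x)-s'(w)\ge 0$. Either way the inequality holds. Writing $\delta_x=s'(x)-s(x)$ and $\delta_w=s'(w)-s(w)$, both in $\{-1,0,1\}$, the constraint $\delta_x-\delta_w\ge 1$ leaves exactly three possibilities: $(\delta_x,\delta_w)=(1,0)$, $(0,-1)$, or the mixed case $(1,-1)$.

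The crux is to eliminate the mixed possibility $(\delta_x,\delta_w)=(1,-1)$, and this is where the preference constraint does the work: here $\delta_x=1$ forces $x\notin\tp_k(v_i)$ while $\delta_w=-1$ forces $w\in\tp_k(v_i)$, but then $x\succ_i w$ and downward closure give $x\in\tp_k(v_i)$, a contradiction. The same observation pins down the surviving cases. When $(\delta_x,\delta_w)=(1,0)$ we have $x\in\tp_k(v'_i)$ and $x\notin\tp_k(v_i)$; were $w\in\tp_k(v_i)$, downward closure would again give $x\in\tp_k(v_i)$, so $w\notin\tp_k(v_i)$, and $\delta_w=0$ then yields $w\notin\tp_k(v'_i)$ --- this is precisely Type~1. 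When $(\delta_x,\delta_w)=(0,-1)$ we have $w\in\tp_k(v_i)$ and $w\notin\tp_k(v'_i)$; downward closure applied to $x\succ_i w$ gives $x\in\tp_k(v_i)$, and $\delta_x=0$ gives $x\in\tp_k(v'_i)$ --- this is Type~2. In both cases $x\in\tp_k(v'_i)$ and $w\notin\tp_k(v'_i)$, as required, and Type~2 forces $\{w,x\}\subseteq\tp_k(v_i)$ with $w\ne x$, so it can occur only when $k\ge 2$.

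The routine parts are the bookkeeping of score changes and the tie-breaking case split; the one genuinely load-bearing step is the elimination of the mixed case, where I expect the subtlety to lie, since it is exactly the combination of $x\succ_i w$ with the downward-closedness of the top-$k$ set that rules out simultaneously raising $x$ and lowering $w$.
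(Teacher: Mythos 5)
Your proof is correct and takes essentially the same route as the paper's: both arguments rest on $x\succ_i w$ together with the fact that $\tp_k(v_i)$ is an upper set under $\succ_i$, which rules out simultaneously promoting $x$ and demoting $w$ --- the paper packages this as a case split on whether $x\in\tp_k(v_i)$, while you equivalently enumerate the score changes of $x$ and $w$ and eliminate the mixed case $(+1,-1)$. If anything, your version is more explicit than the paper's, since you derive the inequality $\bigl[s'(x)-s'(w)\bigr]-\bigl[s(x)-s(w)\bigr]\ge 1$ from the tie-breaking order and establish $x\in\tp_k(v'_i)$ in Type~1 directly, steps the paper leaves tacit.
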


\begin{proof}
Suppose that voter $i$ manipulates in favour of $x$.
If $i$ can increase the score of $x$, then $x\not\in\tp_k(v_i)$.
However, $i$ must rank $x$ higher than $w$ (otherwise, this would not be a manipulation).
Thus, $w\not\in\tp_k(v_i)$ and therefore voter $i$ cannot decrease $w$'s score.
Moreover, if $w\in\tp_k(v'_i)$, then $w$ would beat $x$ under $k$-approval in $(V_{-i}, v'_i)$;
thus, $w\not\in\tp_k(v'_i)$. 

On the other hand, suppose that $i$ cannot increase the score of $x$.
This means that $x\in\tp_k(v_i)$ and hence $i$
is left with reducing the scores of some of $x$'s competitors including the current winner $w$.
For this to be possible, it has to be the case that $w\in\tp_k(v_i)$ and $x\succ_i w$.
Also, we have $w\not\in\tp_k(v'_i)$, as otherwise $w$ would beat $x$ under $k$-approval in $(V_{-i}, v'_i)$.
Finally, as $w\neq x$, we can only have $x, w\in\tp_k(v_i)$ if $k\ge 2$.
\end{proof}

The classification in Lemma~\ref{mantypes} justifies our terminology:
a promoter promotes a new winner and a demoter demotes the old one.
Under Plurality, i.e., when $k=1$, we only have promoters. 

Let $X = \{x_1, \dots, x_\ell\}$ and $Y = \{y_1, \dots, y_\ell\}$ be two disjoint sets of candidates.
Given a linear order $v$ over $C$, we denote by $v[X; Y]$ the vote obtained
by swapping $x_j$ with $y_j$ for $j\in[\ell]$.
If the sets $X$ and $Y$ are singletons, i.e., $X=\{x\}$, $Y=\{y\}$,
we omit the curly braces, and simply write $v[x;y]$.
Clearly, under $k$-approval any manipulative vote of voter $i$ is equivalent 
to a vote of the form $v_i[X; Y]$, where $X\subseteq\tp_k(v_i)$, $Y\subseteq C\setminus\tp_k(v_i)$. 
We can now state a corollary of Lemma~\ref{mantypes}, which
characterises the possible effects of a manipulative vote under 
$k$-approval.

\begin{corollary}\label{prop:change}
Let $w$ be the $k$-approval winner at a profile $V$, let $v^*_i=v_i[X; Y]$, 
where $X\subseteq \tp_k(v_i)$ and $Y\subseteq C\setminus\tp_k(v_i)$.
Let $V'=(V_{-i},v^*_i)$, and let $w'\ne w$ be the $k$-approval winner at $V'$. 
Then either  $w\in X$ or $w'\in Y$ but not both.  
\end{corollary}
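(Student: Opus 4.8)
The plan is to read the statement off Lemma~\ref{mantypes}, after first translating the swap notation into statements about $k$-approval scores. Since $X\subseteq\tp_k(v_i)$ and $Y\subseteq C\setminus\tp_k(v_i)$ are disjoint, the top-$k$ set of the new ballot is $\tp_k(v^*_i)=(\tp_k(v_i)\setminus X)\cup Y$. Hence $X$ is exactly the set of candidates whose score voter $i$ lowers by one and $Y$ is exactly the set whose score he raises by one, while every other candidate's score is unchanged; in particular $w\in X$ says precisely that voter $i$ decreases $w$'s score, and $w'\in Y$ says precisely that he increases $w'$'s score. Reading $v^*_i$ as a manipulative vote in favour of the new winner $w'$ (so $w'\succ_i w$), I would apply Lemma~\ref{mantypes} with $x=w'$: the vote is either a promotion of $w'$ (Type~1) or a demotion of $w$ (Type~2). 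A Type-1 vote has $w\notin\tp_k(v_i)$, so $w\notin X$, together with $w'\in\tp_k(v^*_i)\setminus\tp_k(v_i)=Y$; a Type-2 vote has $w\in\tp_k(v_i)\setminus\tp_k(v^*_i)=X$ together with $w'\in\tp_k(v_i)$, so $w'\notin Y$. These are the only two cases, and they are exactly the two sides of the claimed exclusive-or.

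Alternatively, I would establish the two halves directly. For the ``at least one'' half, suppose $w\notin X$ and $w'\notin Y$. Then the swap can only leave $w$'s score unchanged or raise it (it falls only for members of $X$), and can only leave $w'$'s score unchanged or lower it (it rises only for members of $Y$). Since $w$ beats $w'$ at $V$ we have $\scr_k(w,V)\ge\scr_k(w',V)$, so chaining $\scr_k(w,V')\ge\scr_k(w,V)\ge\scr_k(w',V)\ge\scr_k(w',V')$ gives $\scr_k(w,V')\ge\scr_k(w',V')$. But $w'$ beats $w$ at $V'$, which forces $\scr_k(w,V')\le\scr_k(w',V')$, so all four quantities are equal. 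Then $w$ and $w'$ are tied in score both at $V$ and at $V'$, and since the fixed tie-breaking order resolves the tie in both profiles we would get $w>w'$ (from $V$) and $w'>w$ (from $V'$) simultaneously, a contradiction. Note this half uses no assumption on voter $i$'s preferences; only the score bookkeeping and the tie-breaking edge case matter.

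For the ``not both'' half, suppose $w\in X$ and $w'\in Y$. From $X\subseteq\tp_k(v_i)$ we get $w\in\tp_k(v_i)$, and from $Y\subseteq C\setminus\tp_k(v_i)$ we get $w'\notin\tp_k(v_i)$; since every top-$k$ candidate is preferred to every candidate outside the top $k$, this gives $w\succ_i w'$, contradicting $w'\succ_i w$. I expect this step to be the crux: the exclusivity is not forced by the score bookkeeping alone, since a ballot change that both drops $w$ and raises $w'$ can perfectly well flip the winner from $w$ to $w'$. It relies essentially on $v^*_i$ being a genuine manipulation in favour of $w'$ (so that $w'\succ_i w$), which is the setting inherited from Lemma~\ref{mantypes}. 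With that preference inequality in hand, the ``at least one'' and ``not both'' halves combine into the exclusive-or, matching the promoter/demoter dichotomy of the lemma.
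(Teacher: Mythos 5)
Your proposal is correct, and your first argument is precisely the paper's proof: the paper disposes of this corollary in a single sentence---``Lemma~\ref{mantypes} implies that either the new winner is promoted or the old winner is demoted, but not both''---which is exactly your translation of the swap notation into the promoter/demoter dichotomy with $x=w'$, using $\tp_k(v^*_i)=(\tp_k(v_i)\setminus X)\cup Y$. Your alternative direct argument goes beyond what the paper records, and it buys two things. First, the ``at least one'' half via the chain $\scr_k(w,V')\ge\scr_k(w,V)\ge\scr_k(w',V)\ge\scr_k(w',V')$, with the tie-breaking order then forcing both $w>w'$ and $w'>w$, is a self-contained, preference-free verification that the paper never spells out. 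Second, and most valuably, you correctly isolate the hidden hypothesis: the ``not both'' half genuinely requires $w'\succ_i w$, i.e., that $v^*_i$ is a manipulative vote in the sense of Definition~\ref{def:GS}. The corollary as literally stated omits this, and without it the claim is false---under Plurality, a voter whose sincere top choice $w$ wins by tie-breaking can play $v_i[w;w']$ and hand the victory to $w'$, so that $w\in X$ and $w'\in Y$ simultaneously. The surrounding text makes clear the corollary is meant to characterise the effects of manipulative votes, and the paper's one-line proof silently inherits that assumption from Lemma~\ref{mantypes}, exactly as you do; your explicit flagging of this dependence is a point where your write-up is more careful than the original.
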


\begin{proof}
Lemma~\ref{mantypes} implies that either the new winner is promoted or the old winner is demoted, but not both.
\end{proof}

Consider a manipulative vote $v_i[X;Y]$ of voter $i$ at $V$ under $k$-approval;
we say that $v_i[X;Y]$ is {\em minimal} if for every other manipulative vote $v_i'$
of voter $i$ there is a vote $v_i[X';Y']$ that is equivalent to $v'_i$ and satisfies $|X'|\ge |X|$. That is, a 
manipulative vote is minimal if it performs as few swaps as possible. 
Arguably, minimal manipulative votes are the main tool that a rational voter would use,
as they achieve the desired result in the most straightforward way possible.

We now introduce some useful notation.
Fix a profile $V$. Let $w$ be the $k$-approval winner at $V$, 
and let $t = \scr_k(w, V)$. Set
\begin{eqnarray*}
S_1(V, k) &=& \{c\in C\mid \scr_k(c)=t, w>c\},\\ 
S_2(V, k) &=& \{c\in C\mid \scr_k(c)=t-1, c>w\},
\end{eqnarray*}
and set $S(V, k)=S_1(V, k)\cup S_2(V, k)$. 

We say that a candidate $c$ is {\em $k$-competitive} at $V$ if $c\in S(V, k)$.
The following proposition explains our choice of the term: only $k$-competitive
candidates can become $k$-approval winners as a result of a manipulation.

\begin{proposition}
\label{pinSVk}
Suppose that some voter can manipulate in favour of a candidate $p\in C$
at a profile $V$ with respect to $k$-approval. 
Then $p\in S(V, k)$.
\end{proposition}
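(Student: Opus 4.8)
The plan is to reduce the statement to the structural dichotomy of Lemma~\ref{mantypes}. Since the voter \emph{manipulates} in favour of $p$, the resulting outcome strictly improves on the sincere winner $w$, so in particular $p\neq w$ and Lemma~\ref{mantypes} applies to the manipulative vote. Writing $V'=(V_{-i},v'_i)$ and $t=\scr_k(w,V)$, I would first read off the exact effect of each type on the two scores that matter. In a Type~1 (promoter) vote $w$'s membership in the top-$k$ set is unchanged while $p$ gains one point, so $\scr_k(w,V')=t$ and $\scr_k(p,V')=\scr_k(p,V)+1$. In a Type~2 (demoter) vote $w$ drops out of the top-$k$ set while $p$ stays put, so $\scr_k(w,V')=t-1$ and $\scr_k(p,V')=\scr_k(p,V)$.

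The observation that unifies the two cases is that the score gap between $p$ and $w$ rises by exactly one relative to its value at $V$: in either type one has $\scr_k(p,V')-\scr_k(w,V')=\scr_k(p,V)-t+1$. Setting $\delta=\scr_k(p,V)-t$, note $\delta\le 0$ because $w$ is the winner at $V$. Now I would invoke the hypothesis that $p$ wins at $V'$, hence $p$ beats $w$ there, which means either $\delta+1>0$, or $\delta+1=0$ together with $p>w$. The first alternative forces $\delta=0$, i.e.\ $\scr_k(p,V)=t$; since $w$ beats $p$ at $V$ on equal scores we must have $w>p$, so $p\in S_1(V,k)$. The second alternative gives $\scr_k(p,V)=t-1$ with $p>w$, so $p\in S_2(V,k)$. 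In both cases $p\in S(V,k)$, as required.

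I do not expect a real obstacle, since Lemma~\ref{mantypes} already isolates the two possible shapes of a manipulation and pins down both scores exactly. The only points needing care are bookkeeping ones: keeping the two tie-breaking regimes apart (winning strictly on score versus winning on the tie-break), correctly deducing $w>p$ from the fact that $w$ wins at $V$ with $\scr_k(p,V)=\scr_k(w,V)$, and explicitly discarding the degenerate possibility $\delta+1<0$, in which $p$ trails $w$ at $V'$ and therefore cannot be the winner---this last check is what prevents candidates lying too far below $w$ from entering the conclusion.
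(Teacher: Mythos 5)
Your proof is correct and takes essentially the same route as the paper's: both rest on the promoter/demoter dichotomy of Lemma~\ref{mantypes} (which the paper invokes through Corollary~\ref{prop:change}) combined with the same tie-breaking analysis to land $p$ in $S_1(V,k)$ or $S_2(V,k)$. Your unified gap identity $\scr_k(p,V')-\scr_k(w,V')=\delta+1$ is just a compact repackaging of the paper's three-way case split on $\scr_k(p,V)\in\{t,\ t-1\}$ or $\scr_k(p,V)\le t-2$, with your discarded case $\delta+1<0$ corresponding exactly to the paper's final case.
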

\begin{proof}
Let $w$ be the $k$-approval winner at $V$; clearly, $w\neq p$.
Suppose that voter $i$ can manipulate in favour of $p$ at $V$ by submitting a vote $v'_i$;
let $V'=(V_{-i}, v'_i)$.
Set $t=\scr_k(w, V)$; then $\scr_k(p, V)\le t$.
Note that if $\scr_k(p, V)=t$, it has to be the case that $w > p$, since otherwise
$p$ would beat $w$ at $V$. Thus, in this case $p\in S_1(V, k)$.
Now, suppose that $\scr_k(p, V)=t-1$. 
By Corollary~\ref{prop:change} we have either $\scr_k(w, V')=\scr_k(p, V')=t$ (if $p$ was promoted)
or $\scr_k(w, V')=\scr_k(p, V')=t-1$ (if $w$ was demoted). In both cases we have to have $p > w$, 
as otherwise $w$ would beat $p$ at $V'$. Therefore, in this case $p\in S_2(V, k)$.
Finally, note that it cannot be the case that  $\scr_k(p, V)\le t-2$, 
since in this case by Corollary~\ref{prop:change}
we would have either $\scr_k(w, V')\ge t-1$, $\scr_k(p, V')\le t-2$
or $\scr_k(w, V')=t$, $\scr_k(p, V')\le t-1$, i.e., $w$ would beat $p$ at $V'$.
\end{proof}


Suppose that $S(V, k)\neq\emptyset$.
If $S_1(V,k)\ne \emptyset$, then by $p^*(V, k)$ we denote the top-ranked candidate in $S_1(V,k)$ with respect to $>$;
otherwise, we denote by $p^*(V, k)$ the top-ranked candidate in $S_2(V,k)$ with respect to $>$.
Thus, $p^*(V, k)$ beats all candidates other than $w$ at $V$, and would become a winner
if it were to gain one point or if $w$ were to lose one point.
We omit $V$ and $k$ from the notation when they are clear from the context.

We are now ready to embark on the computational complexity analysis of level-2 strategies 
under $k$-approval, for various values of $k$.

\section{Plurality}\label{sec:plu}
Plurality voting rule is $k$-approval with $k=1$. 
For this rule we only have manipulators of type 1, and
all manipulative votes of voter $i$ in favour of candidate $c$
are equivalent: in all such votes $c$ is placed in the top position. 

The main result of this section is that the problem of deciding whether a given strategy 
of voter 1 weakly dominates another strategy of that voter admits a polynomial-time algorithm. Note that, 
since under Plurality there are only $m$ votes that are pairwise non-equivalent, 
this means that we can check if a given strategy is a level-2 strategy or an improving strategy, or
find a level-2 strategy or an improving strategy (if it exists) in polynomial time; we formalise this intuition in 
Corollary~\ref{cor:lev2} at the end of this section.

Fix a preference profile $V$ over a candidate set $C$ and consider a GS-game 
$(V, \calR_1, (A_i)_{i\in N})$, where $N=N(V, \calR_1)$. Let $w$ be the Plurality winner at $V$. 
As argued above, for each $i\in N\setminus\{1\}$ the set $A_i$ consists of $v_i$ and possibly a number of pairwise
equivalent manipulative votes; without loss of generality, we can remove
all but one manipulative vote, so that $|A_i|\le 2$ for all $i\in N\setminus\{1\}$.
We will now explain how, given two votes $v_1'$ and $v_1''$, voter $1$ 
can efficiently decide if one of these votes weakly dominates the other.

We will first describe a subroutine that will be used by our algorithm.
\begin{lemma}\label{lem:plur-flow}
There is a polynomial-time procedure 
$$
\calA=\calA(G, r, r', x, y, C^{[1]}, C^{[0]}, C^{[-1]}, C^{[-2]})
$$ 
that, 
given a GS-game $G=(V, \calR_1, (A_i)_{i\in N(V, \calR_1)})$ with $|V|=n$, 
two integers $r, r'\in\{0, \dots, n\}$, two distinct candidates $x, y\in C$,
and a partition of candidates in $C\setminus\{x, y\}$ into $C^{[1]}$, $C^{[0]}$, $C^{[-1]}$ and $C^{[-2]}$,  
decides whether there is a strategy profile $V^*$ in $G$ such that 
\begin{itemize}
\item
$\scr_1(x, V[V^*]_{-1})=r$,
\item
$\scr_1(y, V[V^*]_{-1})=r'$, and
\item
for each $c\in C\setminus\{x, y\}$ and each $\ell\in\{1, 0, -1, -2\}$
if $c\in C^{[\ell]}$ then 
$\scr_1(c, V[V^*]_{-1})\le r+\ell$.
\end{itemize}
\end{lemma}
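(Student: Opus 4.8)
The plan is to reduce the feasibility question to an integral flow problem (as the name $\calA$ and the tag \texttt{plur-flow} suggest). First I would observe that under Plurality voter~1's own ballot contributes nothing to the scores $\scr_1(\cdot, V[V^*]_{-1})$, so player~1's strategy is immaterial: he may fix an arbitrary $v^*_1\in\calL(C)$, and it suffices to choose a strategy for each manipulator $i\in N(V,\calR_1)\setminus\{1\}$. Each such manipulator casts exactly one point, which lands on a candidate in the set $T_i=\{\tp(v)\mid v\in A_i\}$; since under Plurality all level-1 strategies of $i$ place his most preferred feasible candidate first and are hence equivalent, we have $|T_i|\le 2$. Writing $B_c$ for the fixed number of points $c$ receives from voters other than~$1$ who are not manipulators, the score of $c$ under a GS-profile $V^*$ equals $B_c$ plus the number of manipulators routing their point to $c$.

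With this reformulation the three conditions become a pure assignment problem: assign each manipulator $i$ a target in $T_i$ so that $x$ receives exactly $r-B_x$ manipulator-points, $y$ receives exactly $r'-B_y$, and every $c\in C^{[\ell]}$ receives at most $r+\ell-B_c$. I would encode this as a flow network with source $s$, sink $t$, a node $u_i$ per manipulator, and a node per candidate. Put an edge $s\to u_i$ with lower bound and capacity both $1$ (forcing every manipulator to cast its point), an edge $u_i\to c$ of capacity~$1$ for each $c\in T_i$, and an edge $c\to t$ whose bounds encode $c$'s constraint: lower bound and capacity both $r-B_x$ for $c=x$, both $r'-B_y$ for $c=y$, and lower bound $0$ with capacity $r+\ell-B_c$ for $c\in C^{[\ell]}$. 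A short case analysis rules out the trivial infeasibilities in advance (for instance $r-B_x<0$ or $r+\ell-B_c<0$).

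I would then invoke the standard fact that feasibility of a flow with lower bounds can be decided in polynomial time through a single max-flow computation in an auxiliary network, with an integral optimum whenever all bounds are integral. The correspondence is then immediate: by conservation at $u_i$ an integral feasible flow assigns each manipulator's unit to exactly one target in $T_i$, and the amount on each edge $c\to t$ equals the number of manipulators sending their point to $c$; the edge bounds translate verbatim into the exact-score conditions at $x$ and $y$ and the upper-bound conditions for the four candidate classes. Conversely, any valid strategy profile induces such a flow. Hence a feasible flow exists if and only if the desired $V^*$ exists, and the whole procedure is polynomial.

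The step I expect to be the crux is enforcing the \emph{exact} score requirements at $x$ and $y$: a naive max-flow formulation captures only upper bounds, so I would impose the equalities via tight lower bounds on the edges $x\to t$ and $y\to t$ (and on the source edges, to guarantee every manipulator's point is actually cast), and then verify carefully that the lower-bound reduction preserves both directions of the correspondence. Everything else—computing the base scores $B_c$, reading off the target sets $T_i$, and checking the degenerate cases—is routine bookkeeping.
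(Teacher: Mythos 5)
Your proposal is correct and follows essentially the same route as the paper: a reduction to integral flow feasibility with unit lower bounds on the source--voter arcs (forcing each vote to be cast), capacity-$1$ arcs from voters to their admissible top candidates, and candidate--sink arcs whose tight lower bounds at $x$ and $y$ encode the exact-score requirements while the classes $C^{[\ell]}$ give the upper bounds. The only (immaterial) difference is that you fold the non-manipulators' points into precomputed base scores $B_c$ and adjust the sink-arc bounds accordingly, whereas the paper keeps every voter in $[n]\setminus\{1\}$ as a node with a single forced arc to $\tp(v_i)$ and uses the unadjusted bounds $r+\ell$.
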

\begin{proof}
We proceed by reducing our problem to an instance of network flow with capacities and lower bounds, as follows.
We construct a source, a sink, a node for each voter $i\in[n]\setminus\{1\}$ and a node 
for each candidate in $C$. There is an arc from the source to each voter node; the capacity and the lower
bound of this arc are set to $1$, i.e., it is required to carry one unit of flow.
Also, there is an arc with capacity $1$ and lower bound 0 from voter $i$ to candidate $c$ if 
$i\in N(V, \calR_1)\setminus\{1\}$ and $c=\tp(v)$ for some $v\in A_i$ or if $i\in [n]\setminus(N(V, \calR_1)\cup\{1\})$ 
and $c=\tp(v_i)$. Finally, there is an arc from each candidate $c$ to the sink. The capacity of this arc
is set to $r+\ell$ if $c\in C^{[\ell]}$ for some $\ell\in\{1, 0, -1, -2\}$;
the lower bounds for these arcs are $0$. For $x$, 
both the capacity and the lower bound of the arc to the sink are set to $r$, and for $y$
they are both set to $r'$. 
We note that some of the capacities may be negative, in which case there is no valid flow.
It is immediate that an integer flow that satisfies all constraints
corresponds to a strategy profile in $G$ where all candidates have the required scores;
it remains to observe that the existence of a valid integer flow  
can be decided in polynomial time.
\end{proof}

We are now ready to describe our algorithm.

\begin{theorem}\label{thm:plur}
Given a GS-game $G=(V, \calR_1, (A_i)_{i\in N(V, \calR_1)})$
and two strategies $v_1', v_1''\in\calL(C)$ of player 1 we can decide in polynomial time
whether $v_1'$ weakly dominates $v_1''$.
\end{theorem}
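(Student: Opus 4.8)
The plan is to exploit the fact that under Plurality the only difference between player~1 casting $v_1'$ and casting $v_1''$ is the single point he contributes: it goes to $x := \tp(v_1')$ in the first case and to $y := \tp(v_1'')$ in the second. Hence, for any fixed strategy profile $V^*$ of the other players in $N_1\setminus\{1\}$, both outcomes are completely determined by the score vector $s(\cdot) := \scr_1(\cdot, V[V^*]_{-1})$ induced by the others: the winner $W'$ when player~1 votes $v_1'$ is the $>$-tie-broken argmax of $s$ with $x$ boosted by one, and the winner $W''$ when he votes $v_1''$ is the argmax of $s$ with $y$ boosted by one. A useful preliminary observation is that boosting $x$ can only make $x$ the winner or leave the no-voter-$1$ winner $w^-$ in place, so $W'\in\{x,w^-\}$ and, symmetrically, $W''\in\{y,w^-\}$; in particular the two winners always lie in a narrow score band near the top.

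Translating weak dominance into this language, $v_1'$ weakly dominates $v_1''$ precisely when (a)~there is no strategy profile of the other players for which $W''\succ_1 W'$, and (b)~there is at least one profile for which $W'\succ_1 W''$. I would therefore enumerate all ordered pairs of distinct candidates $(a,b)$---the intended values of $(W',W'')$---and, for each, decide whether some reachable $s$ simultaneously makes $a$ the winner when $x$ is boosted and $b$ the winner when $y$ is boosted. Given such a reachability oracle the decision is immediate: $v_1'$ weakly dominates $v_1''$ iff no reachable pair $(a,b)$ has $b\succ_1 a$ while some reachable pair has $a\succ_1 b$ (pairs with $a=b$ are ties and are ignored; if $x=y$ no pair with $a\ne b$ is reachable, so (b) fails and the votes are correctly declared equivalent).

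The reachability test for a fixed pair $(a,b)$ is where Lemma~\ref{lem:plur-flow} enters, and it is also the step that needs care. Fixing the scores of the two boosted candidates $x,y$ fails when a winner is some third candidate, since forcing a third candidate to win requires a \emph{lower} bound on its score, which the subroutine does not provide. The fix is to invoke $\calA$ with its two exactly-constrained candidates set to the two targeted winners $a$ and $b$ themselves: I guess their induced scores $r=s(a)$ and $r'=s(b)$ (ranging over $\{0,\dots,n\}$), fix them with the subroutine, and encode both winner conditions as upper bounds on the remaining candidates. Concretely, for each $c\notin\{a,b\}$ I compute its effective score in each scenario---its base score $s(c)$ plus one if $c$ is the boosted candidate $x$ (resp.\ $y$) in that scenario---and impose the tightest upper bound ensuring that neither $a$'s boosted score in scenario~1 nor $b$'s boosted score in scenario~2 is exceeded, with the tie-breaking order folded in. A boosted but losing candidate among $x,y$ is handled purely by such an upper bound, so only $a$ and $b$ ever need their exact scores, and the subroutine suffices.

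It remains to check that these upper bounds really are expressible as $\scr_1(c)\le r+\ell$ with $\ell\in\{1,0,-1,-2\}$, and this is where the band observation pays off. For $(a,b)$ to be jointly realizable, $a$ must beat $b$ when $x$ is boosted and $b$ must beat $a$ when $y$ is boosted; since boosts shift scores by only one, these two requirements force $|r-r'|$ to be bounded by a small constant, and together with the $0/1$ boost offsets and the $0/{-1}$ tie-breaking offsets every required bound lands in $\{r-2,\dots,r+1\}$. Whenever the arithmetic would fall outside this window the pair $(a,b)$ is simply infeasible and $\calA$ reports no valid flow. As there are $O(m^2)$ pairs and $O(n^2)$ score guesses per pair, each resolved by one polynomial-time call to $\calA$, the whole procedure runs in polynomial time. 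The main obstacle is exactly the choice of which two candidates to pin down---the two winners rather than the two boosted candidates---and the accompanying verification that all induced constraints stay within the subroutine's allowed offset window.
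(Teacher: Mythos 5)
Your proposal is correct and takes essentially the same route as the paper: the paper likewise reduces weak dominance to two runs of an ``exists a profile where one vote's outcome strictly beats the other's'' test, and settles that test by guessing the identities and scores of the \emph{two winners} (not the two boosted candidates), pinning those scores exactly in the flow subroutine of Lemma~\ref{lem:plur-flow}, and encoding both winner conditions as tie-break-adjusted upper bounds within the $\{r-2,\dots,r+1\}$ window. Your ``band observation'' is precisely the paper's structural claim that $w^u\neq\tp(u)$ and $w^v\neq\tp(v)$ force $w^u=w^v$, which the paper uses only to organize the same enumeration into three explicit cases instead of all ordered pairs of candidates.
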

\begin{proof}
We will design a polynomial-time procedure that, given two strategies $u, v$ of player~1,
decides if there exists a profile $V^*_{-1}$
of other players' strategies such that $\calR_1(V[V^*_{-1}, u])\succ_1 \calR_1(V[V^*_{-1}, v])$;
by definition, $v'_1$ weakly dominates $v''_1$ if this procedure returns `yes' for $u=v_1'$, $v=v_1''$
and `no' for $u=v_1'', v=v_1'$.
 
Let $a=\tp(u)$, $b=\tp(v)$. We can assume without loss of generality that $a\neq b$, 
since otherwise $u$ and $v$ are equivalent with respect to Plurality. 
Consider an arbitrary profile $V^*_{-1}$ of other players'
strategies, and let $V^u=V[V^*_{-1}, u]$, $V^v=V[V^*_{-1}, v]$, $w^u=\calR_1(V^u)$, $w^v=\calR_1(V^v)$. 
We note that $w^u\neq a$ implies $w^v\neq a$:
if $w^u$ beats $a$ at $V^u$, this is also the case at $V^v$. Similarly, 
if $w^v\neq b$ then also $w^u\neq b$. Now, suppose that 
$w^u\neq a$ and $w^v\neq b$. We claim that in this case $w^u=w^v$.
Indeed, suppose for the sake of contradiction that
$w^u\neq w^v$. As $w^u\neq a$, $w^v\neq b$, the argument above shows that $\{w^u, w^v\}\cap \{a, b\}=\emptyset$.
Thus, both $w^u$ and $w^v$ have the same Plurality score at $V^u$ and $V^v$; as $w^u$ beats $w^v$
at $V^u$, this must also be the case at $V^v$, a contradiction.

Note that $\calR_1(V[V^*_{-1}, u])\succ_1 \calR_1(V[V^*_{-1}, v])$ 
if and only if $w^u\succ_1 w^v$. By the argument in the previous paragraph, 
this can happen in one of the following three cases:
(i)   $w^u=a$, $w^v=b$ and $a\succ_1 b$;
(ii)  $w^u=a$, $w^v=w$ for some $w\neq b$, $a\succ_1 w$;
(iii) $w^u=w$, $w^v=b$ for some $w\neq a$, $w\succ_1 b$.
(We note that we can merge case (i) into case (ii) or case (iii);
we choose not to do so for the sake of clarity of presentation.)
We will now explain how to check if there exists a profile $V^*_{-1}$
that corresponds to any of these three situations.

\begin{description}
\item
Case (i): $w^u=a$, $w^v=b$. 

Suppose first that $a > b$. Then a desired profile $V^*_{-1}$ exists if and only if there is some value $t\in[n]$
such that $\scr_1(a, V^u)=t$ and 
\begin{enumerate}
\item[(a)] 
$\scr_1(b, V^u)=t$, $\scr_1(c, V^u)\le t$ for all $c\in C\setminus\{a, b\}$ 
with $a>c$, $\scr(c, V^u)\le t-1$ for all $c\in C\setminus\{a, b\}$ with $c>a$, or
\item[(b)] 
$\scr_1(b, V^u)=t-1$, $\scr_1(c, V^u)\le t$ for all $c\in C\setminus\{a, b\}$
with $b>c$, and $\scr(c, V^u)\le t-1$ for all $c\in C\setminus\{a, b\}$ with $c>b$.
\end{enumerate}
Note that $\scr_1(a, V^u_{-1})=\scr_1(a, V^u) - 1$ and $\scr_1(c, V^u_{-1})=\scr_1(c, V^u)$
for $c\in C\setminus\{a\}$.
Thus, to check if condition (a) is satisfied for some $t\in [n]$, 
we set $C^{[1]} =\{c\in C\setminus\{a, b\}\mid a > c\}$, $C^{[0]} =\{c\in C\setminus\{a, b\}\mid c > a\}$,
$C^{[-1]}=C^{[-2]}=\emptyset$
and call 
$$
\calA(G, t-1, t, a, b, C^{[1]}, C^{[0]}, C^{[-1]}, C^{[-2]}).
$$
Similarly, to determine whether condition (b) is satisfied for some $t\in [n]$, 
we set $C^{[1]} =\{c\in C\setminus\{a, b\}\mid b>c\}$, $C^{[0]} =\{c\in C\setminus\{a, b\}\mid c > b\}$,
$C^{[-1]}=C^{[-2]}=\emptyset$
and call
$$
\calA(G, t-1, t-1, a, b, C^{[1]}, C^{[0]}, C^{[-1]}, C^{[-2]}).
$$
The answer is `yes' if one of these calls returns `yes' for some $t\in [n]$.

For the case $b>a$ the analysis is similar. In this case, we need to decide whether there exists a value of 
$t\in [n]$ such that
$\scr_1(a, V^u)=t$ and 
\begin{itemize}
\item[(a)] 
$\scr_1(b, V^u)=t-1$, $\scr_1(c, V^u)\le t$ for all $c\in C\setminus\{a, b\}$
with $a>c$, and $\scr(c, V^u)\le t-1$ for all $c\in C\setminus\{a, b\}$ with $c>a$, or
\item[(b)] $\scr_1(b, V^u)=t-2$, $\scr_1(c, V^u)\le t-1$ for all $c\in C\setminus\{a, b\}$
with $b>c$, and $\scr(c, V^u)\le t-2$ for all $c\in C\setminus\{a, b\}$ with $c>b$.
\end{itemize}
Again, this can be decided by calling the procedure $\calA$ with appropriate parameters;
we omit the details.

\item
Case (ii): $w^u=a$, $w^v=w$ for some $w$ with $a\succ_1 w$.
In this case, we go over all candidates $w\in C\setminus\{a, b\}$ with $a\succ_1 w$
and all values of $t\in [n]$ and call $\calA$ with appropriate parameters.

Specifically, if $a>w$, we start by setting $r =t-1, r'=t$, and
$$  
C^{[1]} =  \{c\in C\setminus\{a, w, b\}\mid w>c\}, 
C^{[0]} = \{c\in C\setminus\{a, w, b\}\mid c>w\}, 
C^{[-1]} = C^{[-2]} = \emptyset.
$$
We then place $b$ in $C^{[0]}$ if $w>b$ and in $C^{[-1]}$ otherwise;
our treatment of $b$ reflects the fact that she gets an extra point at $V^v$.

If $w>a$ we start by setting $r=t-1, r'=t-1$, and
$$
C^{[1]} =  \emptyset,  
C^{[0]} = \{c\in C\setminus\{a, w, b\}\mid w>c\},
C^{[-1]} = \{c\in C\setminus\{a, w, b\}\mid c>w\},
C^{[-2]} = \emptyset.
$$
We then place $b$ in $C^{[-1]}$ if $w>b$ and in $C^{[-2]}$ otherwise.

Finally, we call
$$
\calA(G, r, r', a, w, C^{[1]}, C^{[0]}, C^{[-1]}, C^{[-2]}).
$$
The answer is `yes' if one of these calls returns `yes' for some $t\in [n]$
and some $w$ with $a\succ_1 w$.

\item
Case (iii): $w^u=w$, $w^v=b$ for some $w$ with $w\succ_1 b$.
The analysis is similar to the previous case; we omit the details.\qedhere
\end{description}
\end{proof}

Theorem~\ref{thm:plur} immediately implies that natural questions concerning level-2 strategies 
and improving strategies are computationally easy.

\begin{corollary}\label{cor:lev2}
Given a GS-game $G=(V, \calR_1, (A_i)_{i\in N(V, \calR_1)})$
and a strategy $v_1'\in \calL(C)$ of player~1 we can decide in polynomial time
whether $v_1'$ is a level-2 strategy or an improving strategy. Moreover, we can decide in polynomial
time whether player~1 has a level-2 strategy or an improving strategy in $G$.
\end{corollary}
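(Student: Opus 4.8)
The plan is to derive everything from Theorem~\ref{thm:plur} together with the observation that, under Plurality, there are only $m$ pairwise non-equivalent votes, one for each choice of top candidate. Fix representatives $u_1,\dots,u_m\in\calL(C)$ with $\tp(u_j)=c_j$. Since equivalent votes induce identical outcomes at every profile, no vote can weakly dominate a vote equivalent to it (weak domination requires a strict improvement at some profile), so for the purpose of checking weak domination it suffices to reason about these $m$ representatives. By Theorem~\ref{thm:plur}, for any two strategies we can decide weak domination in polynomial time, and each task below reduces to a polynomial number of such calls.

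First I would handle recognition. To decide whether a given $v_1'$ is a level-2 strategy, I run the domination test for each representative $u_j$ with $\tp(u_j)\neq\tp(v_1')$; by Definition~\ref{def:level-2}, $v_1'$ is a level-2 strategy if and only if none of the $u_j$ weakly dominates it, which takes at most $m$ calls. To decide whether $v_1'$ is an improving strategy, I make a single call checking whether $v_1'$ weakly dominates the sincere vote $v_1$.

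For the existence questions, the key point is that weak domination is a strict partial order on the (at most $m$) equivalence classes of strategies: it is transitive, and it is antisymmetric because two distinct classes cannot each be at least as good everywhere and strictly better somewhere relative to the other. A finite strict partial order always has a maximal element, and a maximal element is precisely a strategy that no other strategy weakly dominates, i.e., a level-2 strategy; hence a level-2 strategy always exists, and one can be exhibited by testing each of $u_1,\dots,u_m$ for being level-2 (at most $m^2$ domination calls). For improving strategies I would test whether any $u_j$ weakly dominates $v_1$; player~1 has an improving strategy if and only if at least one such test succeeds. This is consistent with the earlier remark that an improving strategy exists exactly when $v_1$ fails to be a level-2 strategy, and it again uses at most $m$ calls.

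Since Theorem~\ref{thm:plur} already supplies the only nontrivial ingredient, I expect no genuine obstacle: the whole argument is bookkeeping on top of the domination oracle, and the only points requiring care are the reduction to the $m$ equivalence-class representatives and the poset observation that guarantees a level-2 strategy always exists.
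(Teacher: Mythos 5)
Your proposal is correct and follows essentially the same route as the paper: reduce to the $m$ pairwise non-equivalent votes (one per top candidate) and invoke the weak-dominance test of Theorem~\ref{thm:plur} polynomially many times, exactly as in the paper's proof of Corollary~\ref{cor:lev2}. Your additional poset observation---that weak dominance is a finite strict partial order on equivalence classes, so a level-2 strategy always exists---is a correct strengthening the paper does not state, but it does not change the algorithm, which decides existence by brute-force testing in either case.
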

\begin{proof}
Let $a=\tp(v'_1)$.
To decide whether $v'_1$ is an improving strategy, we use the algorithm 
described in the proof of Theorem~\ref{thm:plur} to check whether $v'_1$ weakly dominates $v_1$.
Similarly, to decide whether $v'_1$ is a level-2 strategy, for each $c\in C\setminus\{a\}$
we construct a vote $v^c$ with $\tp(v^c)=c$ and check whether $v^c$
weakly dominates $v'_1$ using the algorithm from the proof of Theorem~\ref{thm:plur}. 
As every strategy of player $1$ is equivalent either to $v'_1$ or to one of the votes we constructed, 
$v'_1$ is a level-2 strategy if and only if it is not weakly dominated by any of the votes
$v^c$, $c\in C\setminus\{a\}$.

Similarly, to decide whether $v_1$ has a level-2 strategy (respectively, an improving strategy), 
we consider all of his $m$ pairwise non-equivalent strategies, and check if any of them is a level-2 strategy
(respectively, an improving strategy), as described in the previous paragraph. 
\end{proof}

\section{2-Approval}\label{sec:2app}
In this section, we study the computational complexity of identifying level-2 strategies 
and improving strategies in GS-games under $2$-approval. We show that if the level-2 player believes
that level-1 players can only contemplate minimal manipulations, 
he can efficiently compute his level-2 strategies as well as his improving strategies. 
As argued in Section~\ref{sec:classification}, 
minimality is a reasonable assumption, as level-1 players have no reason to use complex
strategies when simple strategies can do the job.

Specifically, we prove that,  
under the minimality assumption, given two strategies $v'$ and $v''$, 
the level-2 player can decide in polynomial time whether one of these strategies weakly dominates the other;
just as in the case of Plurality, this implies that he 
can check in polynomial time whether
a given strategy is a level-2 (respectively, improving) strategy or identify all of his level-2 
(respectively, improving) strategies.

The following observations play a crucial role in our analysis.

\begin{proposition}\label{prop:2app-demote}
Consider a GS-game $G=(V, \calR_2, (A_i)_{i\in N(V, \calR_2)})$.
Let $w$ be the 2-approval winner at $V$. 
Then for each player $i\in N(V, \calR_2)\setminus\{1\}$
such that $w\in\tp_2(v_i)$ it holds that $\tp(v_i)\neq w$ and 
the candidate $\tp(v_i)$ is ranked in top two positions in every vote $v\in A_i$.
\end{proposition}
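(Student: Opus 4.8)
The plan is to reduce the whole statement to the classification in Lemma~\ref{mantypes}, the key observation being that all of voter $i$'s level-1 strategies aim at a single target candidate. First I would fix the most preferred feasible candidate of voter $i$ under $\succ_i$, call it $p$, so that $p=\max_{\succ_i} F_i$. Since $i\in N(V,\calR_2)$ is a GS-manipulator, $F_i$ contains some candidate strictly preferred to the truthful winner $w$, and therefore $p\succ_i w$; in particular $p\neq w$. By the definition of a level-1 strategy, every level-1 strategy in $A_i$ is a manipulative vote of voter $i$ in favour of this same candidate $p$.

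Next I would invoke Lemma~\ref{mantypes} with $x=p$. The hypothesis $w\in\tp_2(v_i)$ immediately rules out a Type~1 (promoter) manipulation, since a Type~1 manipulation requires $w\notin\tp_2(v_i)$. Hence any manipulation of voter $i$ in favour of $p$ must be of Type~2 (demoter). The Type~2 conditions yield $w\in\tp_2(v_i)$ and $p\in\tp_2(v_i)$ together with $p\succ_i w$. Because under $2$-approval the set $\tp_2(v_i)$ contains exactly two candidates and $p$ is ranked above $w$, this forces $p$ into the top position and $w$ into the second; that is, $\tp(v_i)=p\neq w$, which establishes the first claim.

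For the second claim I would then use the remaining part of the Type~2 characterisation. For any level-1 strategy $v\in A_i$, Lemma~\ref{mantypes} guarantees $p\in\tp_2(v)$; since we have just shown $p=\tp(v_i)$, this says precisely that $\tp(v_i)$ is ranked among the top two positions of $v$. For the truthful vote $v=v_i$ the claim is immediate, as $\tp(v_i)\in\tp_2(v_i)$ by definition. Since $A_i$ consists only of $v_i$ and a subset of voter $i$'s level-1 strategies, we conclude $\tp(v_i)\in\tp_2(v)$ for every $v\in A_i$.

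I do not anticipate a genuine obstacle here: the argument is essentially a single application of Lemma~\ref{mantypes}. The one point requiring care is the observation that every level-1 strategy of voter $i$ manipulates in favour of the same candidate $p$, so that one invocation of the lemma---in its Type~2 form, which is forced by $w\in\tp_2(v_i)$---simultaneously pins down $\tp(v_i)=p$ and controls the position of $p$ in every vote of $A_i$.
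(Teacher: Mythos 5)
Your proof is correct and takes essentially the same route as the paper, which derives Proposition~\ref{prop:2app-demote} as an immediate consequence of Lemma~\ref{mantypes}: the hypothesis $w\in\tp_2(v_i)$ excludes Type~1 manipulations, so every level-1 strategy of voter $i$ is a Type~2 (demoter) manipulation in favour of his top feasible candidate, forcing $\tp(v_i)=p\neq w$ (with $w$ ranked second) and placing $p$ in the top two positions of every vote in $A_i$. Your extra care in noting that all of $i$'s level-1 strategies target the single candidate $p=\max_{\succ_i}F_i$ is exactly the observation underlying the paper's argument.
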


Proposition~\ref{prop:2app-demote} 
concerns voters who are demoters, and follows immediately
from Lemma~\ref{mantypes}; note also that it does not depend on the minimality assumption.

\begin{proposition}\label{prop:2app-promote}
Consider a GS-game $G=(V, \calR_2, (A_i)_{i\in N(V, \calR_2)})$.
Let $w$ be the 2-approval winner at $V$. 
Consider a player $i\in N(V, \calR_2)\setminus\{1\}$ 
such that $w\not\in\tp_2(v_i)$ and the set $A_i$ consists of $i$'s truthful vote 
and a subset of $i$'s minimal manipulations.
Let $\tp_2(v) = \{a, a'\}$.
Then there is a candidate $c\in C\setminus\{a, a'\}$
such that for each $v\in A_i$ we have $\tp_2(v)\in\{\{a, a'\}, \{a, c\}, \{a',c\}\}$. 
\end{proposition}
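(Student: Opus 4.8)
The plan is to show that every non-truthful vote in $A_i$ promotes one and the same candidate---namely, voter $i$'s most preferred feasible candidate $c$---by a single swap, so that its top-two set differs from $\tp_2(v_i)=\{a,a'\}$ in exactly one position. First I would apply Lemma~\ref{mantypes}: since $w\notin\tp_2(v_i)$, voter $i$ cannot decrease $w$'s score, so he can never be a demoter, and hence every manipulative vote of $i$ is of Type~1. Thus any manipulation of $i$ brings some candidate $x\notin\tp_2(v_i)$ into the top two, keeps $w$ out of it, and makes $x$ the winner. I would next observe that $a$ and $a'$ are \emph{not} feasible for $i$: their $2$-approval score cannot be increased by $i$ (they already lie in $\tp_2(v_i)$), while $w$ satisfies $\scr_2(w,(V_{-i},v'_i))\ge t$ and $w>a, w>a'$ in the tie-breaking order for every $v'_i$, so $w$ always beats $a$ and $a'$. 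Since $A_i$'s non-truthful votes are level-1 strategies, each of them makes $i$'s unique most preferred feasible candidate win; call it $c$. By the previous sentence $c\notin\{a,a',w\}$, so in particular $c\notin\tp_2(v_i)$.

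The core of the argument is that a single swap already makes $c$ the winner, so the minimal manipulations use exactly one swap. By Proposition~\ref{pinSVk}, $c\in S_1(V,2)\cup S_2(V,2)$. Consider the vote $v_i[a;c]$, which increases $\scr_2(c)$ by one, decreases $\scr_2(a)$ by one, and leaves all other scores (including that of $w$) unchanged. If $c\in S_1(V,2)$, then after the swap $c$ has score $t+1$, strictly above every other candidate, so $c$ wins. If $c\in S_2(V,2)$, then $c$ reaches score $t$ with $c>w$; every other candidate $d\neq w$ with score $t$ satisfies $w>d$ (as $w$ is the winner at $V$), and $a$ has dropped to $t-1$, so $c>w>d$ gives $c$ the top tie-break among all score-$t$ candidates, and again $c$ wins. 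The identical argument applies to $v_i[a';c]$. Hence $c$ is feasible through a one-swap manipulation, and the minimal number of swaps in any manipulation of $i$ equals one.

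Finally I would pin down the form of the votes in $A_i$. A one-swap vote $v_i[a;y]$ (or $v_i[a';y]$) with $y\neq c$ leaves $\scr_2(c)$ unchanged, so $c$ stays at score at most $t$ while $w$ remains at $t$ with $w>c$ (if $c\in S_1$) or $c$ stays at $t-1<t$ (if $c\in S_2$); in either case $c$ cannot become the winner. Therefore every one-swap vote that makes $c$ win must promote $c$ itself, i.e.\ it equals $v_i[a;c]$ or $v_i[a';c]$, with $\tp_2(v)\in\{\{a',c\},\{a,c\}\}$. Together with the truthful vote, whose top two set is $\{a,a'\}$, this yields the claimed inclusion $\tp_2(v)\in\{\{a,a'\},\{a,c\},\{a',c\}\}$ for all $v\in A_i$. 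I expect the main obstacle to be the single-swap-suffices step of the second paragraph: it is where the tie-breaking structure of $2$-approval (every competing candidate loses the tie-break to $w$) is genuinely used, and it is what forces all minimal promotions to target the common candidate $c$ rather than several distinct ones. The role of the level-1 assumption is precisely to guarantee this common target, since minimality alone would permit different single swaps to elect different competitive candidates.
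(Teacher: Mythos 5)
Your proof is correct and takes essentially the same route as the paper's: Lemma~\ref{mantypes} rules out demotions (so $i$ is a promoter), minimality forces a single swap, and the level-1 requirement built into the GS-game forces every non-truthful vote in $A_i$ to promote one common target candidate $c\in S(V,2)\setminus\tp_2(v_i)$, yielding exactly the three top-two sets $\{a,a'\}$, $\{a,c\}$, $\{a',c\}$. The only real difference is that you explicitly verify the step the paper asserts implicitly---that a one-swap manipulation in favour of $c$ already succeeds (via the $S_1(V,2)$/$S_2(V,2)$ tie-breaking analysis, using $c>w>d$ in the tied case), which is precisely what makes the minimality assumption reduce all manipulations in $A_i$ to single swaps---and your passing claim that $w>a$, $w>a'$ in the tie-breaking order is only justified when the respective candidate's score at $V$ equals $t$, though the conclusion you need ($w$ beats $a$ and $a'$ at every $(V_{-i},v_i')$) holds in all cases, by score when it does not hold by tie-breaking.
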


\begin{proof}
Player $i$ cannot lower the score of $w$ by changing his vote. However, he can raise the scores
of some candidates in $C\setminus\tp_2(v_i)$ by moving these candidates into top two positions.
In general, $i$ can do that for two candidates simultaneously; however, the minimality
assumption implies that $i$ only moves one candidate into the top two positions. 
Thus, $i$ is a promoter (see Section~\ref{sec:prelim}).
For a vote $v'_1$ to be a level-1 strategy 
the promoted candidate has to be $i$'s most preferred candidate in $S(V,2)\setminus \tp_2(v_i)$ 
(let us denote this candidate by $p$). 
Thus, in this case voter $i$ has 3 options:
(1) to vote truthfully, 
(2) to swap $p$ with the candidate that he ranks first or 
(3) to swap $p$ with the the candidate he ranks second.
This completes the proof
\end{proof}

Propositions~\ref{prop:2app-demote} and~\ref{prop:2app-promote}
enable us to establish an analogue of Lemma~\ref{lem:plur-flow}
for 2-approval under the minimality assumption.

\begin{lemma}\label{lem:2app-flow}
There is a polynomial-time procedure 
$$
\calA'=\calA'(G, r, r', x, y, C^{[1]}, C^{[0]}, C^{[-1]}, C^{[-2]})
$$ 
that, 
given a GS-game $G=(V, \calR_2, (A_i)_{i\in N(V, \calR_2)})$ with $|V|=n$,
where for each $i\in N\setminus\{1\}$ the set $A_i$ consists of $i$'s truthful vote 
and a subset of $i$'s minimal manipulations, 
two integers $r, r'\in\{0, \dots, n\}$, two distinct candidates $x, y\in C$,
and a partition of candidates in $C\setminus\{x, y\}$ into $C^{[1]}$, $C^{[0]}$, $C^{[-1]}$ and $C^{[-2]}$,  
decides whether there is a strategy profile $V^*$ in $G$ such that 
\begin{itemize}
\item
$\scr_2(x, V[V^*]_{-1})=r$,
\item
$\scr_2(y, V[V^*]_{-1})=r'$, and
\item
for each $\ell\in\{1, 0, -1, -2\}$
and each $c\in C^{[\ell]}$ it holds that 
$\scr_2(c, V[V^*]_{-1})\le r+\ell$.
\end{itemize}
\end{lemma}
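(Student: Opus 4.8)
The plan is to mirror the proof of Lemma~\ref{lem:plur-flow}, reducing the existence of a suitable strategy profile to a feasibility question for an integer network flow with capacities and lower bounds. The essential new difficulty compared with Plurality is that under $2$-approval each voter awards \emph{two} points rather than one, and these two points are correlated through the voter's top-two set, so one cannot simply route a single unit of flow per voter as in Lemma~\ref{lem:plur-flow}. The structure of $A_i$ supplied by Propositions~\ref{prop:2app-demote} and~\ref{prop:2app-promote} is exactly what tames this correlation, and I would use it to split each GS-manipulator's two points into a \emph{fixed} part, which is folded into a constant baseline, and a \emph{flexible} part, which is routed through the flow.

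Concretely, I would first compute, for every candidate $c$, a baseline $\beta(c)$ counting the points that $c$ receives no matter how the GS-manipulators vote: one point from each voter $j\in[n]\setminus\{1\}$ outside $N(V,\calR_2)$ with $c\in\tp_2(v_j)$, plus, for each GS-manipulator $i$, one point whenever $c$ lies in $\tp_2(v)$ for \emph{every} $v\in A_i$. Propositions~\ref{prop:2app-demote} and~\ref{prop:2app-promote} then describe the remaining, flexible point(s) precisely. For a demoter $i$, Proposition~\ref{prop:2app-demote} guarantees $\tp(v_i)\in\tp_2(v)$ for all $v\in A_i$, so $\tp(v_i)$ is the fixed candidate and the second point lands on exactly one candidate of the set $\{w\}\cup\{c\mid v_i[w;c]\in A_i\}$. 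For a promoter $i$ with $\tp_2(v_i)=\{a,a'\}$, Proposition~\ref{prop:2app-promote} confines $\tp_2(v)$ to a subset of $\{\{a,a'\},\{a,p_i\},\{a',p_i\}\}$, where $p_i$ is the candidate $i$ promotes; hence $i$ has zero, one or two flexible points according to whether it has zero, one or two available manipulations.

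The flow network then generalizes the one in Lemma~\ref{lem:plur-flow}: I would keep a source, a sink and a node for each candidate, and add a voter node only for those GS-manipulators that have at least one flexible point. A demoter, and a promoter with a single available manipulation, each contributes one flexible unit, so I would use a source-to-voter arc of lower bound and capacity $1$ together with unit-capacity arcs from that voter to each candidate of its flexible set. A promoter with both manipulations available contributes two flexible units to be distributed among $\{a,a',p_i\}$, so I would use a source-to-voter arc of lower bound and capacity $2$ together with unit-capacity arcs to $a$, $a'$ and $p_i$; any feasible flow then sends a unit to exactly two of these three candidates, which selects exactly one of the three admissible pairs. Each arc from a candidate $c$ to the sink encodes the score bound on the flexible flow entering $c$: for $c\in C^{[\ell]}$ it has lower bound $0$ and capacity $r+\ell-\beta(c)$, while for $x$ and $y$ both the lower bound and the capacity equal $r-\beta(x)$ and $r'-\beta(y)$, respectively. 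As in Lemma~\ref{lem:plur-flow}, a negative capacity simply signals infeasibility.

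The step I expect to require the most care is verifying that integer flows in this network correspond bijectively to the admissible strategy profiles attaining the prescribed scores---in particular that the two-unit gadget for a promoter captures precisely the pairs $\{a,a'\}$, $\{a,p_i\}$ and $\{a',p_i\}$ and no others. Granting this correspondence, the score requirements translate directly into the flow constraints, since the $2$-approval score of every candidate $c$ equals $\beta(c)$ plus the flexible flow entering $c$; and as feasibility of an integer flow with capacities and lower bounds can be decided in polynomial time, the procedure $\calA'$ runs in polynomial time, as required.
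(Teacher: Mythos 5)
Your proposal is correct and follows essentially the same route as the paper's proof: a reduction to integer network flow with capacities and lower bounds, in which Propositions~\ref{prop:2app-demote} and~\ref{prop:2app-promote} constrain each manipulator's arcs, including the same two-unit, three-arc gadget that forces a promoter's flow to select exactly one of $\{a,a'\}$, $\{a,p_i\}$, $\{a',p_i\}$. Your only deviation---folding each voter's forced points into a baseline $\beta(c)$ and subtracting it from the candidate-to-sink capacities, rather than routing those points through arcs with lower bound $1$ as the paper does---is an equivalent reformulation.
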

\begin{proof}
Let $w$ be the 2-approval winner at $V$. If $S(V, 2)\neq\emptyset$, set $p^*=p^*(V, 2)$.
We use essentially the same construction as in the proof of Lemma~\ref{lem:plur-flow}.
Specifically, the set of nodes consists of a sink, a source, one node for each
voter in $[n]\setminus\{1\}$, and one node for each candidate $c\in C$.
For each $i\in N\setminus\{1\}$, the capacity and the lower bound 
of the arc from the source to node $i$ are equal to $2$, and the capacities 
and lower bounds of the arcs from candidates 
to the source are defined as in the proof of Lemma~\ref{lem:plur-flow}. 
It remains to describe the arcs connecting voters and candidates.

If $i\not\in N$, we add an arc from $i$ to $c$ for each $c\in\tp_2(v_i)$;
the capacity and the lower bound of these arcs are 1, encoding the fact that
$i$ has to vote for his top 2 candidates.

Now, consider a voter $i\in N\setminus\{1\}$ who is a demoter;
if such a voter exists, we have $S(V, 2)\neq\emptyset$ and hence $p^*$ is well-defined. 
By Proposition~\ref{prop:2app-demote} we have $\tp_2(v_i)=\{p^*, w\}$
and $A_i=\{v_i[w;c]\mid c\in C_i\}$ for some $C_i\subset C\setminus\{p^*, w\}$. 
Then we introduce an arc from $i$ to $p^*$ whose capacity and lower bound are both set to 1, 
and arcs with capacity 1 and lower bound 0 from $i$ to each $c\in C_i\cup\{w\}$.

Finally, consider a voter $i\in N\setminus\{1\}$ who is a promoter;
let $\tp_2(v_i)=\{a, a'\}$ and let $p$ be $i$'s most preferred candidate
in $S(2, V)\setminus\{a, a'\}$. If $A_i$ contains both a vote $v'$
with $\tp_2(v')=\{a, p\}$ and a vote $v''$ with $\tp_2(v'')=\{a', p\}$, then
by Proposition~\ref{prop:2app-promote} it suffices to add arcs with capacity 1 and lower bound
0 that go from $i$ to $a$, $a'$, and $p$. If we have $\tp_2(v)\in\{\{a, a'\}, \{a, p\}\}$
for each $v\in A_i$, we add an arc with capacity 1 and lower bound 1 from $i$ to $a$
and arcs with capacity 1 and lower bound 0 from $i$ to $a'$ and $p$.
Similarly, if we have $\tp_2(v)\in\{\{a, a'\}, \{a', p\}\}$
for each $v\in A_i$, we add an arc with capacity 1 and lower bound 1 from $i$ to $a'$ 
and arcs with capacity 1 and lower bound 0 from $i$ to $a$ and $p$.

It is clear from the construction that a valid integer flow in this network 
corresponds to a strategy profile $V^*$ with the desired properties.
\end{proof}

We are now ready to prove the main result of this section.

\begin{theorem}\label{thm:2-app}
Given a GS-game $G=(V, \calR_2, (A_i)_{i\in N(V, \calR_2)})$,
where for each $i\in N\setminus\{1\}$ the set $A_i$ consists of $i$'s truthful vote
and a subset of $i$'s minimal manipulations,
and two strategies $v_1', v_1''\in\calL(C)$ of player~1, 
we can decide in polynomial time whether $v_1'$ weakly dominates $v_1''$.
\end{theorem}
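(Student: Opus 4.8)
The plan is to follow the architecture of the proof of Theorem~\ref{thm:plur}, with the Plurality network routine $\calA$ replaced throughout by its 2-approval counterpart $\calA'$ of Lemma~\ref{lem:2app-flow}. As there, it suffices to build a polynomial-time subroutine that, given two strategies $u, v\in\calL(C)$ of player~1, decides whether some strategy profile $V^*_{-1}$ of the remaining players satisfies $\calR_2(V[V^*_{-1}, u])\succ_1\calR_2(V[V^*_{-1}, v])$; then $v_1'$ weakly dominates $v_1''$ precisely when this subroutine returns `yes' on $(u,v)=(v_1',v_1'')$ and `no' on $(u,v)=(v_1'',v_1')$. Put $A=\tp_2(u)$ and $B=\tp_2(v)$; as equivalent votes share their top two candidates, we may assume $A\neq B$. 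For a fixed $V^*_{-1}$ write $s(c)=\scr_2(c, V[V^*_{-1}]_{-1})$ for the score contributed by the other voters, set $V^u=V[V^*_{-1},u]$, $V^v=V[V^*_{-1},v]$, and let $w^u, w^v$ be the corresponding winners; then the score of $c$ is $s(c)+[c\in A]$ at $V^u$ and $s(c)+[c\in B]$ at $V^v$.

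Two structural facts drive the argument and generalise the three-case split of Theorem~\ref{thm:plur}. First, passing from $V^v$ to $V^u$ raises every candidate of $A\setminus B$ by one point, lowers every candidate of $B\setminus A$ by one point, and leaves all others (in particular those of $A\cap B$) unchanged. Hence a candidate in $A\setminus B$ that loses at $V^u$ also loses at $V^v$, and symmetrically for $B\setminus A$; applying this to every such candidate shows that $w^u\notin A\setminus B$ forces $w^v\notin A\setminus B$, and $w^v\notin B\setminus A$ forces $w^u\notin B\setminus A$. If both winners avoid their exclusive sets they therefore lie in $C\setminus(A\triangle B)$, where the two worlds assign identical scores, and so $w^u=w^v$. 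Consequently $w^u\succ_1 w^v$ can hold only if $w^u\in A\setminus B$ or $w^v\in B\setminus A$, pinning the winner of at least one world down to the at most two candidates of the relevant exclusive set. Second, since the two score vectors differ pointwise by at most one, their maxima---the winning levels $t_u=\scr_2(w^u,V^u)$ and $t_v=\scr_2(w^v,V^v)$---satisfy $|t_u-t_v|\le 1$.

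Guided by these facts, I would enumerate every ordered pair of distinct candidates $(p,q)$ with $p\succ_1 q$ and $p\in A\setminus B$ or $q\in B\setminus A$ (only $O(m)$ pairs), every level $t_u\in\{0,\dots,n\}$, and every offset $\delta=t_v-t_u\in\{-1,0,1\}$. Each quadruple encodes the demand ``$p$ wins $V^u$ at level $t_u$ and $q$ wins $V^v$ at level $t_v$,'' which translates into base-score constraints: the exact values $s(p)=t_u-[p\in A]$ and $s(q)=t_v-[q\in B]$, and, for every other candidate $c$, the two upper bounds $s(c)\le t_u-[c\in A]$ and $s(c)\le t_v-[c\in B]$, each tightened by one unit when the tie-breaking order ranks $c$ above $p$, respectively above $q$. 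After checking in $O(1)$ time that the two fixed scores are themselves consistent with $p,q$ being the respective winners (i.e.\ that $p$ beats $q$ at $V^u$ and $q$ beats $p$ at $V^v$), I set $x=p$, $y=q$, $r=s(p)$, $r'=s(q)$, record for each $c$ the tighter of its two bounds as a single offset $\ell_c$ relative to $r$, place $c$ in $C^{[\ell_c]}$, and call $\calA'(G,r,r',p,q,C^{[1]},C^{[0]},C^{[-1]},C^{[-2]})$. The subroutine answers `yes' iff one of these polynomially many calls does.

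The hard part is verifying that the combined offset $\ell_c$ always falls inside the admissible range $\{1,0,-1,-2\}$ demanded by $\calA'$. The $V^u$-bound alone gives offset $[p\in A]-[c\in A]$ relative to $r$ (possibly minus one for tie-breaking), which already lies in $\{-2,\dots,1\}$; since $\ell_c$ is the minimum of this and the $V^v$-bound, we automatically have $\ell_c\le 1$, and the only danger is $\ell_c\le -3$. A direct inspection of the $V^v$-bound $\delta+[p\in A]-[c\in B]$ (minus one for tie-breaking) shows that a value below $-2$ forces $p\notin A$, $\delta=-1$, $c\in B$, and $c$ above $q$ in the tie-breaking order. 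But if $p\notin A$ then $p$'s score does not decrease from $V^u$ to $V^v$, so $t_v\ge s(p)+[p\in B]\ge s(p)=t_u$, i.e.\ $\delta\ge 0$; hence this configuration is infeasible. Thus every quadruple realisable by an actual profile yields offsets in range, whereas any quadruple with an out-of-range offset corresponds to no profile and may simply be discarded---so restricting the $\calA'$-calls to in-range quadruples is both sound and complete. As the minimality assumption is already built into $\calA'$, and there are polynomially many quadruples each handled in polynomial time, the subroutine, and therefore the weak-dominance test, runs in polynomial time; the placement of the (at most two) candidates of $A\cup B$ distinct from $p$ and $q$ follows the same uniform recipe and requires no separate treatment of the cases $|A\cap B|=1$ and $A\cap B=\emptyset$.
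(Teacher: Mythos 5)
Your proposal is correct, and its skeleton is the paper's: reduce weak dominance to the one-sided test (``does some $V^*_{-1}$ make $u$'s winner strictly $\succ_1$-preferred to $v$'s winner?''), enumerate candidate winner pairs and winning levels, and translate each demand into calls to $\calA'$. Where you genuinely diverge is in how the enumeration is organised and justified. The paper iterates over \emph{all} triples $(t,w,w')$ with $w\succ_1 w'$ and resolves each by an ad hoc case analysis on whether $w\in\{a,a'\}$, $w'\in\{b,b'\}$ and on tie-breaking positions, exhibiting only two representative cases and omitting the rest as tedious. You instead prove a structural dichotomy generalising the Plurality argument of Theorem~\ref{thm:plur}---if $w^u\notin A\setminus B$ and $w^v\notin B\setminus A$ then $w^u=w^v$, so a strict improvement pins a winner to an exclusive set---and your verification of this (including preservation of the tie-breaking comparison when scores shift by at most one) is sound; this cuts the winner-pair enumeration to $O(m)$ and makes completeness explicit rather than implicit. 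Your uniform indicator encoding, together with enumerating $\delta=t_v-t_u\in\{-1,0,1\}$ (justified since the two score vectors differ pointwise by at most one), subsumes all of the paper's case distinctions at once, including the case $|A\cap B|=1$ that the paper dismisses as ``similar (and simpler)''. Most valuably, you address a point the paper never checks: that the combined offsets fit $\calA'$'s admissible window $\{1,0,-1,-2\}$. Your argument that the only violating configuration ($p\notin A$, $\delta=-1$, $c\in B$, $c$ above $q$ in the tie-breaking order) is unrealisable---because $p\notin A$ forces $t_v\ge s(p)+[p\in B]\ge s(p)=t_u$, hence $\delta\ge 0$---is exactly what makes discarding such quadruples both sound and complete. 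Net effect: the same algorithm and the same polynomial bound, but your write-up closes the gaps that the paper's sketch leaves open, at the cost of carrying the bookkeeping lemmas explicitly.
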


\begin{proof}
Just as in the proof of Theorem~\ref{thm:plur}, it suffices to design a polynomial-time procedure that, 
given two strategies $u, v$ of player~1, decides if there exists a profile $V^*_{-1}$
of other players' strategies such that $\calR_2(V[V^*_{-1}, u])\succ_1 \calR_2(V[V^*_{-1}, v])$.
Let $\tp_2(u)=\{a, a'\}$, $\tp_2(v) = \{b, b'\}$. We can assume that $\{a, a'\}\neq \{b, b'\}$, 
and we will focus on the case where $\{a, a'\}\cap \{b, b'\}=\emptyset$; the case 
where $\{a, a'\}\cap \{b, b'\}$ is a singleton is similar (and simpler).

We use the same notation as in the proof of Theorem~\ref{thm:plur}:
given a profile $V^*_{-1}$ of other players' strategies, 
we let $V^u=V[V^*_{-1}, u]$, $V^v=V[V^*_{-1}, v]$, $w^u=\calR_2(V^u)$, $w^v=\calR_2(V^v)$.
Our goal then is to decide if there exists a profile $V^*_{-1}$ such that $w^u\succ_1 w^v$.
To this end, we go over all values of $t\in[n-1]$ and all candidates $w, w'\in C$
with $w\succ_1 w'$, and ask if there is a profile $V^*_{-1}$
such that $w$ wins at $V[V^*_{-1}, u]$ with $t$ points, 
whereas $w'$ wins at $V[V^*_{-1}, v]$. As in the proof of Theorem~\ref{thm:plur}, 
for each triple $(t, w, w')$ 
we have to consider a number of possibilities, depending on whether $w\in\{a, a'\}$, 
$w'\in\{b, b'\}$ as well as on the relative positions of $w$, $w'$, $a$, $a'$, $b$, and $b'$
with respect to the tie-breaking order. The analysis is as tedious as it is straightforward;
to illustrate the main points, we consider two representative cases. 

\begin{description}
\item[$\boldsymbol{w=a, w'=b, a>a'>b>b'}$\ \ ] 
In this case,  $a$ wins with $t$ points at $V^u$ 
if and only if $\scr_2(a', V^u_{-1})\le t-1$ and for each $c\in C\setminus\{a, a'\}$ we have
$\scr_2(c, V^u_{-1})\le t$ if $a>c$ and $\scr_2(c, V^u_{-1})\le t-1$ if $c>a$.
Suppose that these conditions are satisfied.
Then $b$ can win at $V^v$ with $t+1$ or $t$ points. The former case is possible
if and only if $\scr_2(b, V^u_{-1})= t$. 
The latter case is possible
if and only if $\scr_2(b, V^u_{-1})= t-1$, $\scr_2(b', V^u_{-1})\le t-1$, 
and for each $c\in C\setminus\{a, b, b'\}$ we have
$\scr_2(c, V^u_{-1})\le t$ if $b>c$ and $\scr_2(c, V^u_{-1})\le t-1$ if $c>b$.

Thus, to decide whether this situation is possible, we have to call $\calA'$ twice.
For our first call, we set 
$C^{[1]}=\{c\in C\setminus\{a', b\}\mid a > c\}$, 
$C^{[0]}=\{c\in C\setminus\{a'\}\mid c > a\}\cup \{a'\}$, 
$C^{[-1]}=C^{[-2]}=\emptyset$ and call
$$
\calA'(G, a, b, t-1, t, C^{[1]}, C^{[0]}, C^{[-1]}, C^{[-2]}).
$$
For our second call, we set $C^{[1]}=\{c\in C\setminus\{b'\}\mid b > c\}$, 
$C^{[0]}= \{c\in C\setminus\{a\}\mid c > b\}\cup\{b'\}$, 
$C^{[-1]}=C^{[-2]}=\emptyset$ and call
$$
\calA'(G, a, b, t-1, t-1, C^{[1]}, C^{[0]}, C^{[-1]}, C^{[-2]}).
$$

\item[$\boldsymbol{w\not\in\{a, a', b, b'\}, w'=b, b' > b > a' > w > a}$\ \ ] 
If $w$ wins at $V^u$ with $t$ points, this means that 
$\scr_2(w, V^{u}_{-1})=t$, $\scr_2(a', V^u_{-1})\le t-2$, $\scr_2(a, V^u_{-1})\le t-1$,
$\scr_2(c, V^u_{-1})\le t$ for all $c\in C\setminus\{w, a, a'\}$ with $w > c$, and
$\scr_2(c, V^u_{-1})\le t-1$ for all $c\in C\setminus\{w, a, a'\}$ with $c > w$.
Suppose that these conditions are satisfied. As $w$ still receives $t$ points at $V^v$, 
this means that $b$ wins at $V^v$ if and only if $\scr_2(b, V^{u}_{-1})=t-1$,
$\scr_2(b', V^{u}_{-1})\le t-2$.
Thus, we set $C^{[1]}=\emptyset$, $C^{[0]}=\{c\in C\setminus\{a\}\mid w > c\}$, 
$C^{[-1]}=\{c\in C\setminus\{a', b, b'\}\mid c > w\}$, $C^{[-2]}=\{a', b'\}$ and call
$$
\calA'(G, w, b, t, t-1, C^{[1]}, C^{[0]}, C^{[-1]}, C^{[-2]}).\qedhere
$$
\end{description}
\end{proof}

Just as for Plurality, we obtain the following corollary that describes the complexity 
of finding and testing level-2 strategies and improving strategies under 2-approval.

\begin{corollary}
Given a GS-game $G=(V, \calR_2, (A_i)_{i\in N(V, \calR_2)})$, where for each $i\in N$
the set $A_i$ consists of $i$'s truthful vote and a subset of his minimal manipulations, 
and a strategy $v_1'\in \calL(C)$ of player~1 we can decide in polynomial time
whether $v_1'$ is a level-2 strategy and whether $v'_1$ is an improving strategy. 
Moreover, we can decide in polynomial time whether player~1 has a level-2 strategy 
or an improving strategy in $G$.
\end{corollary}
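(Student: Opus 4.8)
The plan is to mirror the structure of Corollary~\ref{cor:lev2} exactly, since the only ingredient that changed between the Plurality and 2-approval settings is the underlying weak-domination test: in the Plurality case that test came from Theorem~\ref{thm:plur}, and here it is supplied by Theorem~\ref{thm:2-app}. Everything else---reducing level-2 membership and improving-strategy membership to a polynomial number of pairwise weak-domination checks---is formally identical. So first I would recall that, under 2-approval, two votes are equivalent if and only if they have the same top-$2$ set, and hence there are only $\binom{m}{2}=O(m^2)$ pairwise non-equivalent strategies for player~$1$. This is the analogue of the ``only $m$ non-equivalent votes'' observation used for Plurality, and it guarantees that exhaustive enumeration over player~$1$'s strategies runs in polynomial time.

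Next I would handle the \emph{improving strategy} question. By definition, $v_1'$ is an improving strategy exactly when $v_1'$ weakly dominates player~$1$'s sincere vote $v_1$; so I simply invoke the procedure from the proof of Theorem~\ref{thm:2-app} once, with $u=v_1'$ and $v=v_1$ (and again with the roles reversed, as in that proof, to confirm weak rather than strict domination). For the \emph{level-2} question I would use Definition~\ref{def:level-2}: $v_1'$ is a level-2 strategy iff no strategy of player~$1$ weakly dominates it. Letting $\tp_2(v_1')=\{a,a'\}$, I enumerate one representative $v^T$ for each top-$2$ set $T\neq\{a,a'\}$, and call the Theorem~\ref{thm:2-app} procedure to test whether $v^T$ weakly dominates $v_1'$. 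Since every strategy of player~$1$ is equivalent to $v_1'$ or to exactly one of the $v^T$, and weak domination depends only on the equivalence class, $v_1'$ is a level-2 strategy precisely when none of the $v^T$ weakly dominates it. This is $O(m^2)$ calls, each polynomial.

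Finally, for the existence questions---whether player~$1$ \emph{has} any level-2 strategy or any improving strategy in $G$---I would loop over all $O(m^2)$ pairwise non-equivalent candidate strategies $v_1'$ and apply the two membership tests just described. An improving strategy exists iff some $v_1'$ weakly dominates $v_1$; a level-2 strategy always exists (indeed any weakly-undominated strategy qualifies, and the domination relation on a finite set has maximal elements), but to be uniform with the stated claim I would simply report the first $v_1'$ passing the respective test. The whole procedure is a polynomial number of invocations of a polynomial-time subroutine, hence polynomial overall.

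I do not anticipate a genuine obstacle here: the corollary is a bookkeeping consequence of Theorem~\ref{thm:2-app}, and the only point requiring any care is making sure the reduction relies solely on the \emph{top-$2$-set} equivalence of votes under 2-approval, so that enumerating one representative per equivalence class is both sufficient (every strategy's behavior is captured) and polynomially bounded. The minimality hypothesis on each $A_i$ is inherited unchanged from Theorem~\ref{thm:2-app} and needs no further justification, since the corollary's subroutine is exactly that theorem's procedure.
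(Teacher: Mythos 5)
Your proposal is correct and matches the paper's intended argument: the paper states this corollary without a separate proof, noting it follows ``just as for Plurality,'' i.e., exactly as in the proof of Corollary~\ref{cor:lev2}, which is what you reproduce---reduce both membership and existence questions to pairwise weak-domination tests via the procedure of Theorem~\ref{thm:2-app}, enumerating one representative per top-$2$ equivalence class (of which there are only $\binom{m}{2}$). Your handling of the two-directional invocation of the procedure (``yes'' for $u=v_1'$, $v=v_1$ and ``no'' with roles reversed) and the observation that weak domination depends only on equivalence classes are precisely the points the paper's Plurality proof relies on.
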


We remark that the minimality assumption plays an important role in our analysis.
Indeed, in the absence of this assumption a promoter $i$ may manipulate by swapping
two different candidates (one of which is his most preferred 2-competitive candidate $p$)
into the top two positions. Let $v_i[\tp_2(v_i);\{p, c\}]$ be some such manipulation.
If we try to model this possibility via a network flow construction, we would have to add
edges from $i$ to both $p$ and $c$; the lower bounds on these edges would have to be set to 0, 
to allow $i$ to vote truthfully. However, there may then be a flow that uses
the edge $(i, c)$, but not $(i, p)$, which corresponds to 
a vote that promotes $c$, but not $p$; such a vote is not a level-1 strategy.

Interestingly, a level-2 player may want to swap two candidates into the top two positions, 
even if he assumes that all level-1 players use minimal strategies. In fact, the following example shows 
that a strategy of this form may weakly dominate all other non-equivalent strategies.

\begin{example}\label{ex:non-min}
{\em
Let the profile of sincere preferences be as in Table~\ref{table:non-min}, and assume that 
the voting rule is $2$-approval and the tie-breaking order is given by 
$a>b>c>d>\dots$. Assume that player 1 is the level-2 player.
The winner under $2$-approval is $a$ with two points; candidates $b$, $c$, and $d$
also have two points each. Voters 1, 4 and 5 are GS-manipulators; voter 4 may manipulate
by swapping $c$ into top two positions, and voter 5 may manipulate by swapping
$d$ into top two positions.

Consider the GS-game where $N=\{1, 4, 5\}$, $A_4=\{v_4, v_4[b;c]\}$, 
and $A_5=\{v_5, v_5[b;d]\}$ (note that both $A_4$ and $A_5$ only contain a proper subset
of the respective player's minimal manipulations; for instance, $v_4[u_1;c]\not\in A_4$). 
We claim that $v_1[\{u_5, u_6\};\{b, c\}]$
is a weakly dominant strategy for player~1. Indeed, consider the four possible scenarios:
\begin{itemize}
\item
Players 4 and 5 are truthful. Then the best outcome that voter 1 can ensure is that $b$ wins.
\item
Player 4 is truthful, but player 5 manipulates. Then the best outcome that voter 1 can ensure is that $c$ wins.
\item
Player 4 manipulates, but player 5 is truthful. Then the best outcome that voter 1 can ensure is that $c$ wins.
\item
Players 4 and 5 both manipulate. Then the best outcome that voter 1 can ensure is that $c$ wins.  
\end{itemize}
Now, it is clear that only votes that rank $b$ and $c$ in top two positions achieve 
all of these objectives simultaneously.

\begin{table}[h]
\begin{center}
\begin{tabular}{c|c|c|c|c|c|c}
voter 1&voter 2&voter 3&voter 4&voter 5&voter 6&voter 7\\
\midrule
$u_5$    &$a$      &$a$      &$b$      &$b$      &$c$      &$c$      \\
$u_6$    &$d$      &$d$      &$u_1$    &$u_2$    &$u_3$    &$u_4$    \\
$b$      &$\dots$  &$\dots$  &$c$      &$d$      &$a$      &$a$      \\
$c$      &$\dots$  &$\dots$  &$\dots$  &$\dots$  &$\dots$  &$\dots$  \\
$d$      &$\dots$  &$\dots$  &$\dots$  &$\dots$  &$\dots$  &$\dots$  \\
$\dots$  &$\dots$  &$\dots$  &$\dots$  &$\dots$  &$\dots$  &$\dots$  \\

\end{tabular}
\caption{The strategy $bc\dots$ of player 1 weakly dominates all non-equivalent strategies.
}\label{table:non-min}
\end{center}
\end{table}
}
\end{example}

\section{$\boldsymbol{k}$-Approval for $\boldsymbol{k\ge 3}$}\label{sec:hard}
Regrettably, our analysis of $k$-approval under the minimality assumption does not extend 
from $k=2$ to $k=3$. Specifically, the argument breaks down when we consider a potential demoter
under 3-approval who can only help his top candidate by swapping his second and third candidate 
out of the top three positions. If he chooses to manipulate, he has to perform both 
of these swaps at once; he can also remain truthful and not perform any swaps. 
It is not clear how to capture this all-or-nothing behavior via network flows.
We conjecture that finding a level-$2$ strategy under $3$-approval is computationally hard,  
even under minimality assumption. We will now prove a weaker result, showing that
this problem is NP-hard for $k$-approval with $k\ge 4$ (and without the minimality assumption).
Moreover, we will also show that it is coNP-hard to decide whether a given strategy is improving.

\begin{theorem}\label{thm:4app}
For every fixed $k\ge 4$, 
given a GS-game $G=(V, \calR_k, (A_i)_{i\in N})$ and a strategy $v$
of voter~1, it is {\em NP}-hard to decide whether $v$ is a level-2 strategy, 
and it is {\em coNP}-hard to decide whether $v$ is an improving strategy. 
\end{theorem}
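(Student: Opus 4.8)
The plan is to reduce from an NP-complete problem to establish NP-hardness of deciding whether a given strategy $v$ is a level-2 strategy, and to use essentially the same construction (with the roles of ``yes'' and ``no'' suitably arranged) for the coNP-hardness of recognizing improving strategies. The natural source problem is one that captures the ``all-or-nothing'' combinatorial difficulty that the authors flagged in the opening of Section~\ref{sec:hard}: under $k$-approval with $k\ge 4$, a single GS-manipulator choosing to manipulate may be forced to simultaneously move several specific candidates out of (or into) his top $k$ positions, so his strategy set encodes a genuine all-or-nothing choice. This is exactly the structure one exploits to embed a constraint-satisfaction-style problem, so I would reduce from a problem such as \textsc{Exact Cover by 3-Sets} (X3C) or \textsc{3-SAT}, with a preference for X3C because its set-cover flavor maps cleanly onto ``which manipulators fire'' decisions. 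Recall that $v$ is a level-2 strategy iff no other strategy of player~1 weakly dominates $v$; equivalently, $v$ is \emph{not} a level-2 strategy iff some alternative vote $u$ weakly dominates it. So to make deciding ``is $v$ a level-2 strategy'' NP-hard, I would construct the game so that the target answer is ``no'' precisely when a combinatorial solution exists --- i.e. so that a dominating $u$ exists iff the X3C instance has an exact cover.

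First I would design the candidate set and the preference profile so that player~1's sincere-like vote $v$ produces a fixed baseline winner, while an alternative vote $u$ (which player~1 is tempted to cast) does better in \emph{all} strategy profiles of the GS-manipulators if and only if the manipulators' simultaneous choices can be coordinated in a way that corresponds to a valid exact cover. Concretely, I would introduce one ``element'' candidate for each element of the ground set and one GS-manipulator per 3-set; the strategy set $A_i$ of the $i$th manipulator would consist of his truthful vote together with a single minimal manipulation that (because $k\ge 4$) is forced to push exactly the three element-candidates of that 3-set out of --- or into --- his top $k$ block in one indivisible move. The key quantitative gadget is to set approval scores so that a distinguished ``good'' candidate $g$ beats a distinguished ``bad'' candidate $b$ in the profile induced by $u$ exactly when every element-candidate has been ``covered'' the right number of times, which happens iff the set of firing manipulators forms an exact cover. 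Tie-breaking order $>$ must be fixed with care to control the boundary cases.

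The crux of the construction is arranging that $u$ weakly dominates $v$ --- i.e. $u$ is at least as good for player~1 in \emph{every} GS-profile and strictly better in at least one --- \emph{iff} the X3C instance is a yes-instance. The ``strictly better somewhere'' clause is easy to arrange by a single witnessing profile; the delicate direction is the universal clause, which requires that $u$ never does worse than $v$ no matter how the exponentially many manipulator coalitions behave. I would enforce this by padding with filler candidates and auxiliary voters so that outside the carefully balanced ``exact cover'' configuration, $u$ and $v$ induce the same winner (or a winner that $u$ handles at least as well), while the unique configuration where $v$ would outperform $u$ is made impossible unless the cover is \emph{not} exact --- so that the existence of an exact cover removes all obstructions to domination. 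This is the step I expect to be the main obstacle: ensuring that a \emph{single} alternative vote $u$ dominates $v$ across all $2^{\,|N|}$ manipulator profiles forces the gadget to be ``monotone'' in the right way, since weak domination quantifies universally over opponents, and a careless encoding lets an adversarial coalition break domination even when a cover exists. Balancing the base scores so that non-cover coalitions are neutralized while cover coalitions are decisive is the heart of the argument.

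For the coNP-hardness of recognizing improving strategies, I would reuse the same skeleton: an improving strategy is one that weakly dominates player~1's \emph{sincere} vote $v_1$, so I would take $u$ to be a fixed candidate vote and set up the profile so that $u$ weakly dominates $v_1$ iff \emph{no} exact cover exists (equivalently, iff the complementary counting condition always holds), making ``$u$ is improving'' a coNP-hard question by reduction from the complement of X3C. The two reductions share the gadgets and differ only in which candidate plays the ``good'' role and in the direction of the score inequalities, so once the first construction and its correctness proof are in hand, the second follows by a symmetric modification; I would verify that the constructions use only polynomially many candidates and voters and that all of player~1's strategy sets and the manipulators' sets $A_i$ are produced in polynomial time, so both reductions are genuinely polynomial.
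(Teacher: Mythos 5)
Your overall skeleton matches the paper's proof (reduction from \textsc{X3C}, one GS-manipulator per triple, an indivisible multi-candidate manipulation exploiting $k\ge 4$, a counting candidate whose score tracks how many manipulators fire, and a single shared construction serving both hardness claims), but there is a genuine polarity error in the level-2 half. You propose to arrange that ``a dominating $u$ exists iff the \textsc{X3C} instance has an exact cover,'' i.e.\ yes-instances of \textsc{X3C} map to the answer ``$v$ is \emph{not} a level-2 strategy.'' That is a reduction from \textsc{X3C} to the \emph{complement} of the level-2 recognition problem, which establishes coNP-hardness of recognizing level-2 strategies (equivalently, NP-hardness of deciding that a strategy is dominated) --- not the NP-hardness claimed in the theorem. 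The paper runs the implication the other way: on yes-instances the distinguished strategy $z''_0$ (promote $p$ alone) is \emph{undominated}, because at the exact-cover profile $z''_0$ uniquely achieves voter $0$'s best feasible outcome $p$, forcing any would-be dominator to be equivalent to $z''_0$; on no-instances the strategy $z'_0$ (promote $p$ \emph{and} the counter $c$) weakly dominates it. Getting this direction requires exactly the work your sketch defers: proving that \emph{non}existence of a cover implies domination, a universally quantified claim over all coalitions, which the paper handles by case analysis (disjoint vs.\ overlapping firing sets) and, crucially, by preprocessing the \textsc{X3C} instance --- adding the cyclic triples $S_i, S'_i$ so that a strict-domination witness coalition $\widehat{\Sigma}$ is guaranteed to exist, and an element $g_1$ occurring in a single set so that tie-breaking and voter $0$'s ranking of $c$ above the other element candidates cooperate. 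Your remark that ``the strictly better somewhere clause is easy to arrange by a single witnessing profile'' overlooks that this witness must exist \emph{inside the game}, i.e.\ be realizable by the manipulators' level-1 strategy sets in every instance, which is precisely why the padding is needed.

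Two secondary inaccuracies: you describe each manipulator's single manipulation as ``minimal,'' but a minimal manipulation under $k$-approval promotes a single competitive candidate, whereas your all-or-nothing gadget (and the paper's $z'_i = z_i[D_i; C_i\cup\{c\}]$, which promotes the three element candidates together with the counter $c$) is inherently non-minimal --- the paper explicitly flags that its hardness proof uses non-minimal level-1 strategies and leaves the minimal case open, so claiming minimality would assert an open strengthening. Also, your plan to build two separate constructions ``differing in which candidate plays the good role'' is unnecessary: in the paper one profile and one game suffice, with the two strategies $z'_0$ and $z''_0$ flipping status simultaneously as the \textsc{X3C} answer flips, which is what makes the NP/coNP pairing in the theorem come out correctly. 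Your coNP half (improving strategy $u$ dominates the sincere vote iff no exact cover exists) has the right polarity and matches the paper.
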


\begin{proof}
Our hardness proof proceeds by a reduction from the classic NP-complete problem {\sc Exact Cover by 3-Sets (X3C)}.
An instance of this problem is given by a ground set $\Gamma = \{g_1, \dots,g_{3\nu}\}$ and
a collection $\Sigma = \{\sigma_1, \dots, \sigma_\mu\}$ of $3$-element subsets of $\Gamma$.
It is a `yes'-instance if there is a subcollection $\Sigma'\subseteq \Sigma$
with $|\Sigma'|=\nu$ such that $\cup_{\sigma\in \Sigma'}\sigma = \Gamma$, and a `no'-instance otherwise.

We will first establish that our problems are hard for $k=4$; towards the end of the proof, we will show
how to extend our argument to $k>4$.

Consider an instance $I^0 = (\Gamma^0, \Sigma^0)$ of {\sc X3C} with $|\Gamma|=3\nu'$. 
We will first modify this instance as follows. We add three new elements to $\Gamma^0$
and a set containing them to $\Sigma^0$. We then add $\nu'+2$ triples $x_i, y_i, z_i$, 
$i\in[\nu'+2]$, of new elements to $\Gamma^0$ and for each such triple we add 
the set $S_i= \{x_i, y_i, z_i\}$ to $\Sigma^0$. Finally, we add sets $S'_i = \{y_i, z_i, x_{i+1}\}$, 
$i\in[\nu'+1]$, and $S'_{\nu'+2} = \{y_{\nu'+2}, z_{\nu'+2}, x_1\}$ to $\Sigma^0$.
We then renumber the elements of the ground set
so that the elements added at the first step are numbered $g_1, g_2, g_3$.
We denote the resulting instance by $(\Gamma, \Sigma)$, and let $\nu=|\Gamma|/3$, $\mu=|\Sigma|$.
Clearly, $I = (\Gamma, \Sigma)$ is a `yes'-instance of X3C if and only if $I = (\Gamma^0, \Sigma^0)$ is.
We let $\widehat{\Sigma}=\{S_i, S'_i\mid i\in[\nu'+2]\}$; 
we have $\nu = 2\nu'+3$, $|\widehat{\Sigma}|=2(\nu'+2)$. 
 
In what follows, when writing $X \succ Y$ in the description of an order $\succ$,
we mean that all elements of $X$ are ranked above all elements of $Y$, but the order
of elements within $X$ and within $Y$ is not specified and can be arbitrary. 
We construct a GS-game as follows.
We introduce a set of candidates $C'=\{c_1, \dots, c_{3\nu}\}$
that correspond to elements of~$\Gamma$, 
three special candidates $w$, $p$, $c$, and, finally, a set of dummy candidates
$$
D = \bigcup_{i=0}^{\mu} D_i \cup D_c \cup \bigcup_{j=1}^{\nu+1} E_j \cup \bigcup_{i=1}^{3\nu}\bigcup_{j=1}^{\nu+1} F_{i, j}, 
$$
where $|D_i|=4$ for $i=0, \dots, \mu$, $|D_c|=3$, and $|E_j|=2$, $|F_{i, j}|=3$ for $i\in [3\nu]$, $j\in [\nu+1]$. 
Thus, the set of candidates is $C=\{w,p,c\}\cup C'\cup D$.
We define the tie-breaking  order $>$ on $C$ by setting 
$$
w>c>p>c_1>\dots>c_{3\nu}>D.
$$

For each $j\in[\mu]$, we let $C_j=\{c_i\mid g_i\in \sigma_j\}$.
The profile $V$ consists of $2+\mu+(3\nu+1)(\nu+1)$ votes defined as follows:
\begin{align*}
z_0 &= D_0\succ p\succ c_1\succ c  \succ C'\setminus\{c_1\} \succ \dots, \\
 z_i &= D_i\succ C_i \succ c \succ \dots, \hspace{4cm} i\in [\mu], \\
  u &= c  \succ D_c \succ w \succ\dots,\\
 u_{j} &= w\succ p \succ E_j \succ \dots, \hspace{4cm} j\in [\nu+1],\\
 u_{i, j} &= c_i \succ F_{i,j} \succ w \succ \dots \hspace{4cm} i \in [3\nu],\ j\in [\nu+1].
 \end{align*}
We have 
$$
\scr_4(w, V)=\scr_4(p, V)=\scr_4(c_i, V)=\nu+1 \quad \text{for all $i\in [3\nu]$},
$$ 
$\scr_4(c, V)=1$, and $\scr_4(d, V)\le 1$ for each $d\in D$.
Thus, $w$ wins under $4$-approval because of the tie-breaking rule.

We have $S(V,4)=C'\cup\{p\}$. The set of GS-manipulators in this profile consists
of the first $\mu+1$ voters; we assume that the first voter (i.e., voter 0)
is the level-2 voter. 
We now define a GS-game for this profile by constructing the players' sets of strategies as follows:
\[
z'_i = z_i[D_i; C_i\cup\{c\}], \quad A_i = \{z_i, z'_i\} \text{ for all $i\in[\mu]$.}
\]
Observe that for each $i\in [\mu]$ the vote
$z'_i$ is a level-1 strategy for voter $i$, which makes $i$'s top candidate in $C_i$
the winner with $\nu+2$ points (note that voter $i$ orders $C_i$ in the same way as $>$ does, 
so tie-breaking favours $i$'s most preferred candidate in $C_i$).
This completes the description of our game $G$.

Fix some $d,d'\in D_0$ and let 
$$
z'_0 = z_0[\{d, d'\}; \{p, c\}], \qquad 
z''_0=z_0[d;p].
$$
Note that both $z'_0$ and $z''_0$ are level-1 strategies for voter $0$, 
which make $p$ the winner with $\nu+2$ points. 
Clearly, we can construct the profile $V$ and the players' sets of strategies 
in polynomial time given $I$.\par\smallskip

We will now argue that $z'_0$ is an improving strategy if and only if $I=(\Gamma, \Sigma)$ is a `no'-instance of {\sc X3C}, 
and that $z''_0$ is a level-2 strategy if and only if $I=(\Gamma, \Sigma)$ is a `yes'-instance of {\sc X3C}.

As a preliminary observation, consider some strategy $z$ of voter $0$ such that $\tp_4(z)$ consists of $p$ 
and three dummy candidates. By construction, 
for every profile of other players' strategies, $z$ and $z_0''$ result in the same outcome.
Moreover, if everyone except voter $0$ votes truthfully, voter $0$ strictly prefers $z''_0$
to every strategy $\hat{z}$ with $\tp_4(\hat{z})\subseteq D$. Thus, $z''_0$
can only be weakly dominated by a strategy that places at least one candidate in $C'\cup \{c, w\}$ 
in top $4$ positions.

Suppose first that $I$ is a `yes'-instance of {\sc X3C}. Fix a subcollection $\Sigma'$
witnessing this, and consider a profile $V'$ where the GS-manipulators that correspond to sets in $\Sigma'$ 
vote strategically, whereas everyone else votes truthfully. 
We have $\scr_4(p, V')=\scr_4(c, V')=\scr_4(w, V') = 
\nu+1$, $\scr_4(c_i, V')=\nu+2$ for all $c_i\in C'$, 
so $c_1$ wins. 
However, if voter $0$ changes his vote to $z'_0$, the winner would be $c$, and voter $0$ prefers $c_1$ to $c$,
so voter $0$ strictly prefers voting $z_0$ over voting $z'_0$ in this case, 
i.e., $z'_0$ is not an improving strategy.

Now, if voter $0$ changes her vote to $z''_0$ instead, $p$ becomes the election winner, 
which is the best feasible outcome from voter $0$'s perspective. The only
way for voter $0$ to achieve this outcome is to rank $p$ and some dummy candidates in the top $4$ positions;
any vote $\widehat{z}$ with $\tp_4(\widehat{z})\cap (C'\cup\{c, w\})\neq\emptyset$ is strictly worse for voter $0$,
and hence cannot weakly dominate $z''_0$. As we have already observed that no strategy $\widehat{z}$ 
with $\tp_4(\widehat{z})\cap (C'\cup\{c, w\})=\emptyset$ can weakly dominate $z''_0$,
it follows that if $I$ is a `yes'-instance of X3C then $z''_0$ is a level-2 strategy.

On the other hand, suppose that $I$ is a `no'-instance of X3C.
Consider a strategy profile $V^*$ in $G$, and let 
$\Sigma''$ be a subcollection of $\Sigma$ that corresponds to players in $[\mu]$ 
who vote non-truthfully in $V^*$; we know that $\Sigma''$ is not an exact cover of $\Gamma$.
We will argue that voter $0$ weakly prefers $z'_0$ to both $z_0$ and $z''_0$ for every choice of $\Sigma''$,
and there are choices of $\Sigma''$ for which this preference is strict.

If $\Sigma''=\emptyset$, i.e., all voters in $[\mu]$ are truthful, then voter $0$ benefits
from changing his vote from $z_0$ to $z'_0$, as this vote makes $p$ the winner. 
Similarly, suppose that
all sets in $\Sigma''$ are pairwise disjoint (and hence $|\Sigma''| \le \nu-1$).
Then candidate $c$ gets at most $\nu$ points and
the winner in $V[V^*_{-0}, z_0]$ is one of the candidates from $C'$ (with $\nu+2$ points).
On the other hand, the winner in $V[V^*_{-0}, z'_0]$ 
is $p$ (with $\nu+2$ points), so voter $0$
benefits from changing his vote to $z'_0$.
In both of these cases, $z''_0$ has the same effect as $z'_0$.

Now, suppose that the sets in $\Sigma''$ are not pairwise disjoint. 
Let $X$ be the set of elements that appear in the largest
number of sets in $\Sigma''$, and let $g_\ell$ be the element of $X$
with the smallest index. Note that $g_\ell\neq g_1$, since we modified our instance of X3C so that 
$g_1$ only occurs in one set.
The winner in $V[V^*_{-0}, z_0]$ is either $c_\ell$ or $c$, and the winner's score is at least $\nu+3$.
Suppose that voter $0$ changes his vote from $z_0$ to $z'_0$.
If the winner in $V[V^*_{-0}, z_0]$ was $c$, this remains to be the case, 
and if the winner was $c_\ell$ then either $c_\ell$ remains the winner or $c$ becomes the winner, 
and voter $0$ prefers $c$ to $c_\ell$. Thus, in this case voting $z'_0$
is at least as good as voting $z_0$, and voting $z''_0$ has the same effect as voting $z_0$.

We conclude that whenever $\Sigma''$ is not an exact cover of $\Gamma$, voting $z'_0$ is at least
as good as voting $z_0$ or $z''_0$. 
It remains to establish that $z'_0$ is sometimes strictly better than either of these strategies.
To this end, suppose that $\Sigma''=\widehat{\Sigma}$. 
If voter $0$ votes $z'_0$, then the scores of the candidates covered by sets in $\widehat{\Sigma}$
are $\nu+3$, the score of $c$ is $1+2(\nu'+2) +1 =2\nu'+6=\nu+3$, and all other candidates have lower scores, so $c$ wins.
However, if voter $0$ votes $z_0$ or $z''_0$, $c$'s score is $\nu+2$, and therefore the winner is a candidate in $C'$.
Thus, in those circumstances, voter $0$ strictly prefers $z'_0$ to both $z_0$ and $z''_0$.
Hence, if $I$ is a `no'-instance of X3C, $z'_0$ weakly dominates $z_0$ and $z''_0$, 
and hence $z''_0$ is not a level-2 strategy.
This completes the proof for $k=4$. 

For $k>4$, we modify the construction by introducing $|V|$ additional groups of dummy candidates
$H_1, \dots, H_{|V|}$ of size $k-4$ each. We renumber the voters from $1$ to $|V|$ and modify 
the preferences of the $i$-th voter, $i\in [\mu]$, by inserting the 
group $H_i$ in positions $5, \dots, k$, and adding all other new dummy candidates
at the bottom of his ranking. Then the $k$-approval scores of all candidates in $C$
remain the same as in the original construction, and the $k$-approval score of each new dummy candidate
is $1$. The rest of the proof then goes through without change. 
\end{proof}

We note that the strategies of level-1 players in our hardness proof are not minimal;
determining whether our hardness result remains true under the minimality assumption 
is an interesting research challenge.

\begin{remark}
Our complexity lower bounds are not tight: we do not know whether the computational problems
we consider are in, respectively, {\em NP} and {\em coNP}. The following argument provides upper bounds 
on their complexity.

For every $n$-player game $G=(N, (A_i)_{i\in N}, (\succeq_i)_{i\in N})$, 
where each relation $\succeq_i$ is represented by a polynomial-time computable function of its arguments, and 
for every pair of strategies $u$, $v$ of player~1, the problem of deciding whether $u$ weakly dominates $v$ 
belongs to the complexity class {\em DP} (difference polynomial-time) \citep{PY84,Wechsung85}. 
Indeed, $u$ weakly dominates $v$ if and only if 
\begin{itemize}
\item[(a)] 
for every profile $P_{-1}$ of other players' strategies we have $(P_{-1}, u)\succeq_1 (P_{-1}, v)$
(which can be checked in {\em coNP}), and
\item[(b)] 
for some profile $P_{-1}$ of other players' strategies we have $(P_{-1}, u)\succ_1 (P_{-1}, v)$
(which can be checked in {\em NP}), 
\end{itemize}
i.e., the language associated with our problem is an intersection of an {\em NP}-language and a {\em coNP}-language.

Thus, for every GS-game based on a polynomial-time voting rule (including $k$-approval) the problem of checking whether a given
strategy is improving is in {\em DP}. This also means that
for $k$-approval with a fixed value of $k$ the problem
of checking whether a given strategy is a level-2 strategy belongs to the boolean hierarchy \citep{Wechsung85}, 
as there are only ${m\choose k}\le m^k$ pairwise non-equivalent votes, and 
it suffices to check that none of these votes weakly dominates the given strategy.
\end{remark}

\section{Conclusions and Further Research}\label{sec:conclusions}
We have initiated the analysis of voting games from the perspective of the cognitive hierarchy.
We have adopted a distribution-free approach that uses the concept of weak dominance in order
to reason about players' actions. The resulting framework is mathematically rich, captures some interesting behaviours, 
and presents a number of algorithmic challenges, even for simple voting rules. To illustrate this, we focused
on a well-known family of voting rules, namely, $k$-approval with $k\ge 1$, and investigated the complexity
of finding level-2 strategies and improving strategies under various rules in this family. For Plurality, i.e., for $k=1$, 
level-2 strategies and improving strategies are easy to find, and for $k\ge 4$ these problems are computationally hard, 
but for $k=2, 3$ we do not have a full understanding of the complexity of these problems. 
We identify a natural assumption (namely, the minimality assumption),
which is sufficient to obtain an efficient algorithm for $k=2$; however, it is 
not clear if it remains useful for larger values of $k$.

We list a few specific algorithmic questions that remain open:
\begin{itemize}
\item Is there a polynomial-time algorithm for computing level-2 strategies and improving strategies under $2$-approval 
without the minimality assumption?
\item Does Theorem~\ref{thm:4app} remain true under the minimality assumption?
\item What can be said about $3$-approval, with or without the minimality assumption?
\item What can be said about other prominent voting rules, most importantly the Borda rule?
\end{itemize}

In our analysis, we have focused on level-1 and level-2 voters. It would also be interesting
to extend our formal definitions to level-$\ell$ players for $\ell\ge 3$ and 
to investigate the associated algorithmic issues. 
While it is intuitively clear that the view of the game for these players will be more complex, 
it appears that for Plurality our algorithm can be extended in a straightforward manner; 
however, it is not clear if this is also the case for 2-approval.
Another interesting question, which can be analysed empirically, is whether truthful voting is likely 
to be a level-2 strategy, or, more broadly, how many votes in $\calL(C)$ are level-2 strategies; again, 
this question can also be asked for level-$\ell$ strategies with $\ell\ge 2$. 
Note that the analysis in our paper contributes algorithmic tools tools to tackle this issue.

A yet broader question, which can only be answered by combining empirical data and theoretical analysis,  
is whether the cognitive hierarchy approach provides a plausible
description of strategic behavior in voting. While our paper makes the first steps towards answering it, 
there is more to be done to obtain a full picture.

\bibliographystyle{abbrvnat}
\bibliography{voting}

\begin{thebibliography}{50}
\providecommand{\natexlab}[1]{#1}
\providecommand{\url}[1]{\texttt{#1}}
\expandafter\ifx\csname urlstyle\endcsname\relax
  \providecommand{\doi}[1]{doi: #1}\else
  \providecommand{\doi}{doi: \begingroup \urlstyle{rm}\Url}\fi

\bibitem[Bartholdi et~al.(1989)Bartholdi, Tovey, and Trick]{btt89}
J.~J. Bartholdi, C.~A. Tovey, and M.~A. Trick.
\newblock The computational difficulty of manipulating an election.
\newblock \emph{Social Choice and Welfare}, 6\penalty0 (3):\penalty0 227--241,
  1989.

\bibitem[Battaglini(2005)]{bat:j:abstentions}
M.~Battaglini.
\newblock Sequential voting with abstention.
\newblock \emph{Games and Economic Behavior}, 51:\penalty0 445--463, 2005.

\bibitem[Brandenburger(1992)]{brandenburger1992knowledge}
A.~Brandenburger.
\newblock Knowledge and equilibrium in games.
\newblock \emph{The Journal of Economic Perspectives}, 6\penalty0 (4):\penalty0
  83--101, 1992.

\bibitem[Buenrostro et~al.(2013)Buenrostro, Dhillon, and
  Vida]{bue-dhi-vid:j:domsolv}
L.~Buenrostro, A.~Dhillon, and P.~Vida.
\newblock Scoring rule voting games and dominance solvability.
\newblock \emph{Social Choice and Welfare}, 40\penalty0 (2):\penalty0 329--352,
  2013.

\bibitem[Cain(1978)]{Cain1978}
B.~Cain.
\newblock Strategic voting in {B}ritain.
\newblock \emph{American Journal of Political Science}, 22:\penalty0 639--655,
  1978.

\bibitem[Camerer et~al.(2004)Camerer, Ho, and Chong]{camerer2004cognitive}
C.~F. Camerer, T.-H. Ho, and J.-K. Chong.
\newblock A cognitive hierarchy model of games.
\newblock \emph{The Quarterly Journal of Economics}, 119\penalty0 (3):\penalty0
  861--898, 2004.

\bibitem[Cox(1997)]{Cox1997}
G.~Cox.
\newblock \emph{Making Votes Count}.
\newblock Cambridge University Press, 1997.

\bibitem[Crawford et~al.(2013)Crawford, Costa-Gomes, and
  Iriberri]{crawford2013structural}
V.~P. Crawford, M.~A. Costa-Gomes, and N.~Iriberri.
\newblock Structural models of nonequilibrium strategic thinking: Theory,
  evidence, and applications.
\newblock \emph{Journal of Economic Literature}, 51\penalty0 (1):\penalty0
  5--62, 2013.

\bibitem[Dellis(2010)]{dellis2010weak}
A.~Dellis.
\newblock Weak undominance in scoring rule elections.
\newblock \emph{Mathematical Social Sciences}, 59\penalty0 (1):\penalty0
  110--119, 2010.

\bibitem[Desmedt and Elkind(2010)]{des-elk:c:eq}
Y.~Desmedt and E.~Elkind.
\newblock Equilibria of plurality voting with abstentions.
\newblock In \emph{Proceedings of the 11th ACM Conference on Electronic
  Commerce (EC-2010)}, pages 347--356, 2010.

\bibitem[Dhillon and Lockwood(2004)]{dhi-loc:j:dominance}
A.~Dhillon and B.~Lockwood.
\newblock When are plurality rule voting games dominance-solvable?
\newblock \emph{Games and Economic Behavior}, 46:\penalty0 55--75, 2004.

\bibitem[Dutta and Sen(2012)]{dut-sen:j:nash}
B.~Dutta and A.~Sen.
\newblock Nash implementation with partially honest individuals.
\newblock \emph{Games and Economic Behavior}, 74\penalty0 (1):\penalty0
  154--169, 2012.

\bibitem[Elkind et~al.(2015{\natexlab{a}})Elkind, Grandi, Rossi, and
  Slinko]{ElkindEtAlIJCAI2015}
E.~Elkind, U.~Grandi, F.~Rossi, and A.~Slinko.
\newblock Gibbard--{S}atterthwaite games.
\newblock In \emph{Proceedings of the 24th International Joint Conference on
  Artificial Intelligence (IJCAI-2015)}, pages 533--539, 2015{\natexlab{a}}.

\bibitem[Elkind et~al.(2015{\natexlab{b}})Elkind, Obraztsova, Markakis, and
  Skowron]{emos15}
E.~Elkind, S.~Obraztsova, E.~Markakis, and P.~Skowron.
\newblock Equilibria of plurality voting: Lazy and truth-biased voters.
\newblock In \emph{Proceedings of the 8th International Symposium on
  Algorithmic Game Theory (SAGT-2015)}, pages 110--122, 2015{\natexlab{b}}.

\bibitem[Endriss et~al.(2016)Endriss, Obraztsova, Polukarov, and
  Rosenschein]{EOPR16}
U.~Endriss, S.~Obraztsova, M.~Polukarov, and J.~S. Rosenschein.
\newblock Strategic voting with incomplete information.
\newblock In \emph{Proceedings of the 25th International Joint Conference on
  Artificial Intelligence (IJCAI-2016)}, pages 236--242, 2016.

\bibitem[Farquharson(1969)]{far:b:voting}
R.~Farquharson.
\newblock \emph{Theory of Voting}.
\newblock Yale University Press, 1969.

\bibitem[Feddersen et~al.(1990)Feddersen, Sened, and
  Wright]{fed-sen-wri:j:entry}
T.~J. Feddersen, I.~Sened, and S.~G. Wright.
\newblock Rational voting and candidate entry under plurality rule.
\newblock \emph{American Journal of Political Science}, 34\penalty0
  (4):\penalty0 1005--1016, 1990.

\bibitem[Fudenberg and Levine(1998)]{fudenberg1998theory}
D.~Fudenberg and D.~K. Levine.
\newblock \emph{The theory of learning in games}, volume~2.
\newblock MIT press, 1998.

\bibitem[Gibbard(1973)]{gib:j:gs}
A.~Gibbard.
\newblock Manipulation of voting schemes: A general result.
\newblock \emph{Econometrica}, 41\penalty0 (4):\penalty0 587--601, 1973.

\bibitem[Grandi et~al.(2017)Grandi, Hughes, Rossi, and Slinko]{GHRS}
U.~Grandi, D.~Hughes, F.~Rossi, and A.~Slinko.
\newblock Gibbard-{S}atterthwaite games for k-approval voting rules.
\newblock Technical report, arXiv:1707.05619, 2017.

\bibitem[Harsanyi and Selten(1988)]{harsanyi1988general}
J.~C. Harsanyi and R.~Selten.
\newblock \emph{A general theory of equilibrium selection in games}.
\newblock MIT Press, 1988.

\bibitem[Hazon and Elkind(2010)]{safe-sagt}
N.~Hazon and E.~Elkind.
\newblock Complexity of safe strategic voting.
\newblock In \emph{Proceedings of the 3rd International Symposium on
  Algorithmic Game Theory (SAGT-2010)}, pages 210--221, 2010.

\bibitem[Koolyk et~al.(2017)Koolyk, Strangeway, Lev, and Rosenschein]{KSLR17}
A.~Koolyk, T.~Strangeway, O.~Lev, and J.~S. Rosenschein.
\newblock Convergence and quality of iterative voting under non-scoring rules.
\newblock In \emph{Proceedings of the 26th International Joint Conference on
  Artificial Intelligence (IJCAI-2017)}, 2017.

\bibitem[Lev and Rosenschein(2016)]{lev-ros:j:iterative}
O.~Lev and J.~S. Rosenschein.
\newblock Convergence of iterative scoring rules.
\newblock \emph{Jornal of Artificial Intelligence Research}, 57:\penalty0
  573--591, 2016.

\bibitem[Meir(2017)]{Meir-trends}
R.~Meir.
\newblock Iterative voting.
\newblock In U.~Endriss, editor, \emph{Trends in Computational Social Choice}.
  AI Access, 2017.

\bibitem[Meir et~al.(2010)Meir, Polukarov, Rosenschein, and
  Jennings]{mei-pol:c:convergence}
R.~Meir, M.~Polukarov, J.~S. Rosenschein, and N.~R. Jennings.
\newblock Convergence to equilibria of plurality voting.
\newblock In \emph{Proceedings of the 24th {AAAI} Conference on Artificial
  Intelligence (AAAI-2010)}, pages 823--828, 2010.

\bibitem[Meir et~al.(2014)Meir, Lev, and Rosenschein]{MLR14}
R.~Meir, O.~Lev, and J.~S. Rosenschein.
\newblock A local-dominance theory of voting equilibria.
\newblock In \emph{Proceedings of the 15th ACM Conference on Electronic
  Commerce (EC-2014)}, pages 313--330, 2014.

\bibitem[Messner and Polborn(2007)]{mes-pol:j:strong}
M.~Messner and M.~K. Polborn.
\newblock Strong and coalition-proof political equilibria under plurality and
  runoff rule.
\newblock \emph{International Journal of Game Theory}, 35\penalty0
  (2):\penalty0 287--314, 2007.

\bibitem[Moulin(1979)]{mou:j:dominance}
H.~Moulin.
\newblock Dominance solvable voting schemes.
\newblock \emph{Econometrica}, 47:\penalty0 1337--1351, 1979.

\bibitem[Myatt(2007)]{Myatt2007}
D.~P. Myatt.
\newblock On the theory of strategic voting.
\newblock \emph{The Review of Economic Studies}, 74:\penalty0 255--281, 2007.

\bibitem[Myerson and Weber(1993)]{mye-web:j:voting}
R.~Myerson and R.~Weber.
\newblock A theory of voting equilibria.
\newblock \emph{American Political Science Review}, 87\penalty0 (1):\penalty0
  102--114, 1993.

\bibitem[Nagel(1995)]{nagel1995unraveling}
R.~Nagel.
\newblock Unraveling in guessing games: An experimental study.
\newblock \emph{The American Economic Review}, 85\penalty0 (5):\penalty0
  1313--1326, 1995.

\bibitem[Obraztsova and Elkind(2012)]{obr-elk:c:opt}
S.~Obraztsova and E.~Elkind.
\newblock Optimal manipulation of voting rules.
\newblock In \emph{Proceedings of 11th International Conference on Autonomous
  Agents and Multiagent Systems (AAMAS-2012)}, pages 619--626, 2012.

\bibitem[Obraztsova et~al.(2013)Obraztsova, Markakis, and
  Thompson]{obr-mar-tho:c:truth-biased}
S.~Obraztsova, E.~Markakis, and D.~R.~M. Thompson.
\newblock Plurality voting with truth-biased agents.
\newblock In \emph{Proceedings of the 6th International Symposium on
  Algorithmic Game Theory (SAGT-2013)}, pages 26--37, 2013.

\bibitem[Obraztsova et~al.(2015{\natexlab{a}})Obraztsova, Lev, Markakis,
  Rabinovich, and Rosenschein]{OLMRR15}
S.~Obraztsova, O.~Lev, E.~Markakis, Z.~Rabinovich, and J.~S. Rosenschein.
\newblock Beyond plurality: Truth-bias in binary scoring rules.
\newblock In \emph{Proceedings of the 4th International Conference on
  Algorithmic Decision Theory (ADT-2015)}, pages 451--468, 2015{\natexlab{a}}.

\bibitem[Obraztsova et~al.(2015{\natexlab{b}})Obraztsova, Markakis, Polukarov,
  Rabinovich, and Jennings]{OMPRJ15}
S.~Obraztsova, E.~Markakis, M.~Polukarov, Z.~Rabinovich, and N.~R. Jennings.
\newblock On the convergence of iterative voting: How restrictive should
  restricted dynamics be?
\newblock In \emph{Proceedings of the 29th {AAAI} Conference on Artificial
  Intelligence (AAAI-2015)}, pages 993--999, 2015{\natexlab{b}}.

\bibitem[Obraztsova et~al.(2016)Obraztsova, Rabinovich, Elkind, Polukarov, and
  Jennings]{OREPJ16}
S.~Obraztsova, Z.~Rabinovich, E.~Elkind, M.~Polukarov, and N.~R. Jennings.
\newblock Trembling hand equilibria of plurality voting.
\newblock In \emph{Proceedings of the 25th International Joint Conference on
  Artificial Intelligence (IJCAI-2016)}, pages 440--446, 2016.

\bibitem[Papadimitriou and Yannakakis(1984)]{PY84}
C.~H. Papadimitriou and M.~Yannakakis.
\newblock The complexity of facets (and some facets of complexity).
\newblock \emph{Journal of Computer and System Sciences}, 28\penalty0
  (2):\penalty0 244--259, 1984.

\bibitem[Pattanaik(1976)]{pattanaik1976threats}
P.~K. Pattanaik.
\newblock Threats, counter-threats, and strategic voting.
\newblock \emph{Econometrica: Journal of the Econometric Society}, pages
  91--103, 1976.

\bibitem[Peters et~al.(2012)Peters, Roy, and
  Storcken]{peters2012manipulability}
H.~Peters, S.~Roy, and T.~Storcken.
\newblock On the manipulability of approval voting and related scoring rules.
\newblock \emph{Social Choice and Welfare}, 39\penalty0 (2-3):\penalty0
  399--429, 2012.

\bibitem[Reijngoud and Endriss(2012)]{rei-end:c:polls}
A.~Reijngoud and U.~Endriss.
\newblock Voter response to iterated poll information.
\newblock In \emph{Proceedings of the 11th International Conference on
  Autonomous Agents and Multiagent Systems (AAMAS-2012)}, pages 635--644, 2012.

\bibitem[Reyhaneh and Wilson(2012)]{rey-wil:c:bestreply}
R.~Reyhaneh and M.~Wilson.
\newblock Best reply dynamics for scoring rules.
\newblock In \emph{Proceedings of the 20th European Conference on Artificial
  Intelligence (ECAI-2012)}, pages 672--677, 2012.

\bibitem[Satterthwaite(1975)]{sat:j:gs}
M.~A. Satterthwaite.
\newblock Strategy-proofness and {A}rrow's conditions: Existence and
  correspondence theorems for voting procedures and social welfare functions.
\newblock \emph{Journal of Economic Theory}, 10\penalty0 (2):\penalty0
  187--217, 1975.

\bibitem[Shapley(1964)]{shapley64}
L.~S. Shapley.
\newblock Some topics in two-person games.
\newblock \emph{Advances in game theory, Annals of Mathematical Studies},
  52:\penalty0 1--29, 1964.

\bibitem[Slinko and White(2014)]{safe2}
A.~Slinko and S.~White.
\newblock Is it ever safe to vote strategically?
\newblock \emph{Social Choice and Welfare}, 43:\penalty0 403--427, 2014.

\bibitem[Stahl and Wilson(1994)]{stahl1994experimental}
D.~O. Stahl and P.~W. Wilson.
\newblock Experimental evidence on players' models of other players.
\newblock \emph{Journal of economic behavior \& organization}, 25\penalty0
  (3):\penalty0 309--327, 1994.

\bibitem[Thompson et~al.(2013)Thompson, Lev, Leyton-Brown, and
  Rosenschein]{tho-lev-ley:c:empirical}
D.~R.~M. Thompson, O.~Lev, K.~Leyton-Brown, and J.~S. Rosenschein.
\newblock Empirical analysis of plurality election equilibria.
\newblock In \emph{Proceedings of the 12th International Conference on
  Autonomous Agents and Multiagent Systems (AAMAS-13)}, pages 391--398, 2013.

\bibitem[Wechsung(1985)]{Wechsung85}
G.~Wechsung.
\newblock On the boolean closure of {NP}.
\newblock In \emph{Proceedings of the 4th International Conference on
  Fundamentals of Computation Theory (FCT-1985)}, pages 485--493, 1985.

\bibitem[Wright and Leyton{-}Brown(2010)]{aaai/WrightL10}
J.~R. Wright and K.~Leyton{-}Brown.
\newblock Beyond equilibrium: Predicting human behavior in normal-form games.
\newblock In \emph{Proceedings of the 24th {AAAI} Conference on Artificial
  Intelligence (AAAI-2010)}, pages 901--907, 2010.

\bibitem[Xia and Conitzer(2010)]{xia-con:c:spne}
L.~Xia and V.~Conitzer.
\newblock Stackelberg voting games: Computational aspects and paradoxes.
\newblock In \emph{Proceedings of the 24th {AAAI} Conference on Artificial
  Intelligence (AAAI-2010)}, pages 805--810, 2010.

\end{thebibliography}


\end{document}